\newtheorem{theorm}{Theorem}[section]
\newtheorem{corol}{Corollary}[theorm]
\newtheorem{lemm}[theorm]{Lemma}
\newtheorem{propos}[theorm]{Proposition}
\theoremstyle{definition}
\newtheorem{defin}{Definition}
\begin{document}
\begin{frontmatter}

\title{Bifurcations in Latch-Mediated Spring Actuation (LaMSA) Systems} 
% Title, preferably not more than 10 words.

\author[First]{Vittal Srinivasan$^1$} 
\author[First]{Nak-seung Patrick Hyun$^2$}

\address[First]{ Authors are affiliated to Elmore Family School of Electrical and Computer Engineering, Purdue University, West Lafayette, IN, USA (e-mail: [1] srini133@purdue.edu, [2] nhyun@purdue.edu).}

\begin{abstract}         % Abstract of 50--100 words
: In nature, different species of smaller animals produce ultra-fast movements to aid in their locomotion or protect themselves against predators. These ultra-fast impulsive motions are possible, as often times, there exist a small latch in the organism that could hold the potential energy of the system, and once released, generate an impulsive motion. These type of systems are classified as Latch Mediated Spring Actuated (LaMSA) systems, a multi-dimensional, multi-mode hybrid system that switches between a latched and an unlatched state. The LaMSA mechanism has been studied extensively in the field of biology and is observed in a wide range of animal species, such as the mantis shrimp, grasshoppers, and trap-jaw ants. In recent years, research has been done in mathematically modeling the LaMSA behavior with physical implementations of the mechanism. A significant focus is given to mimicking the physiological behavior of the species and following an end-to-end trajectory of impulsive motion. This paper introduces a foundational analysis of the theoretical dynamics of the contact latch-based LaMSA mechanism. The authors answer the question on what makes these small-scale systems impulsive, with a focus on the intrinsic properties of the system using bifurcations. Necessary and sufficient conditions are derived for the existence of the saddle fixed points. The authors propose a mathematical explanation for mediating the latch when a saddle node exists, and the impulsive behavior after the bifurcation happens. 
\end{abstract}

\begin{keyword}
Bio-inspired Robots, Impulsive Systems, Bifurcations, Multidimensional Systems, Hybrid Systems, Mechanical Systems.
\end{keyword}
\end{frontmatter}
%===============================================================================
\section{Introduction}
\vspace{-3mm}
Latch-mediated Spring Actuation (LaMSA) systems form a class of non-linear systems that use the interplay of latches and springs to generate movement through the mediation of energy. The actuation of the spring causes a change in potential energy stored in the spring to kinetic energy that generates movement. The mediation of the latch acts as a control to facilitate this energy transfer, creating an ultra-speed movement or impulse (\cite{SL2019}). The behavior of these impulsive systems has been studied in detail through a biological lens. \cite{CP2018} investigated the highly accelerated punches of the mantis shrimp, which generates a force equivalent to a tiger's bite against other underwater predators or prey. These strikes reach peak acceleration of the order of
$10^6\: \mbox{rad s}^{-2}$, making it one of the fastest strikes in the animal kingdom. The LaMSA mechanism is also observed in insects and amphibians to aid their movement. \cite{RA2019} analyzed the nature of the elastic jump in Cuban tree frogs on different base substrates to investigate the energy flow in the frogs while jumping.% From a robotics perspective, %\cite{DB2023} modeled a LaMSA-based jumping robot to better understand this energy transfer in the spring projectile system on surfaces with varying compliance.

Under the broad class of LaMSA, systems can be further classified based on the type of latch into (i) contact latch, (ii) fluidic latch, and (iii) geometric latch. Contact latch mechanisms have a physical latch that holds the spring system in place. The unlatching leads to the projectile's
ballistic takeoff, as seen in the mandibles of trap-jaw ants (\cite{PB2006}). Fluidic latches are mediated by the microscopic and macroscopic fluid properties within a system, as seen in most fungal species (\cite{LP2017}). The geometric latch has a state dependence based on some geometric configurations in the system, such as forces, moment arms, the center of mass, etc., as seen in snapping shrimp (\cite{LP2023}) and other organisms. %\cite{SP2023} gives a holistic overview of the LaMSA mechanism in bio-mechanical systems and the implications for future collaboration in the field of robotics.

With this knowledge, significant progress has been made in replicating these mechanisms in bio-inspired robotics. \cite{IB2018} proposed a framework for synthesizing, scaling, and analyzing power-amplified biological systems with the dynamic coupling of motors, springs, and latches. \cite{DB2020} created a physical model of the contact latch model and investigated the idea of varying unlatch time for the LaMSA system with different latch radii. \cite{HO2023} proposed a generalized model of the contact latch and explored the idea of controlling the unlatch time by varying the latch velocity. Other classifications of the LaMSA mechanisms, such as the geometric latch in the mantis shrimp, have been explored with simulations and physical implementations by \cite{SH2021}. %The dynamics of the click beetle were studied by \cite{BW2021}, and the authors proposed an analytical method to characterize the spring damping and forces that govern the LaMSA mechanism. \cite{AS2019} investigated the agonist-antagonist muscle-based latch seen in grasshoppers. They presented a mathematical model that explores the feasibility of using skeletal muscles as the latch to enhance mechanical power transfer from the muscle tendon to an inertial load.

From the dynamics perspective, the LaMSA mechanism can be viewed as a hybrid system that switches between modes of latching to unlatching. We can further classify LaMSA under Multi-Dimensional Multi-Mode $(\mbox{M}^3\mbox{D})$ systems, which was introduced by \cite{EV2010} as the nature of the state space changes from a 1 DoF system when latched to a 2 DoF system when unlatched. The idea of bifurcations is used frequently in the analysis of switched dynamics. \cite{ZW2014} analyzed the Hopf and fold bifurcations in the centrifugal flywheel governor system, a fourth-order non-linear system whose torque switches between a constant load and periodic oscillations. %\cite{HO2018} analyzed the generalized Poincar$\acute{\mbox{e}}$ maps in Filippov-type four-dimensional homogeneous linear switched systems.
%A symbolic method of analyzing the bifurcation behavior of switching non-smooth systems is presented by \cite{DD2005} with an application in DC/DC switching converters.

The vast literature on LaMSA provides a rich background on the behavior of various animal species from a biological perspective. It brings forth a deeper understanding of how these organisms use the LaMSA mechanism, for example, to protect against predators or for locomotion. This paper proposes a mathematical foundation to understand the intrinsic characteristics of the contact latch-based LaMSA mechanism. The authors analytically explain the question: ``What makes these small-scale systems impulsive?'' through a bifurcation that occurs by varying the input force on the latch and validate this proposition with numerical simulations. With the goal of controlling the latch mechanism, this paper explores the properties of the equilibrium points of the system. Finally, it introduces a sensitivity metric that provides insights into methods of control. The paper is structured as follows: Section 2 describes the generalized mathematical model of the contact latch-based system. Section 3 explores the existence of the fixed points in the system and gives the necessary and sufficient conditions for the saddle-node. Section 4 shows the existence of bifurcations in the contact latch-based LaMSA mechanism. Section 5 gives numerical simulations that validate the results on Matlab, followed by the conclusions in Section 6.
%%%%%%%%%%%%%%%%%%%%%%%%%%%%%%%%%%%%%%%%%%%%%%%%%%%%%%%%%%%%%%%%%%%%%%%%%%%%%%%%
\section{Mathematical Model Description } \label{sec:model}
\begin{figure}[t!]
\includegraphics[width=8.5cm]{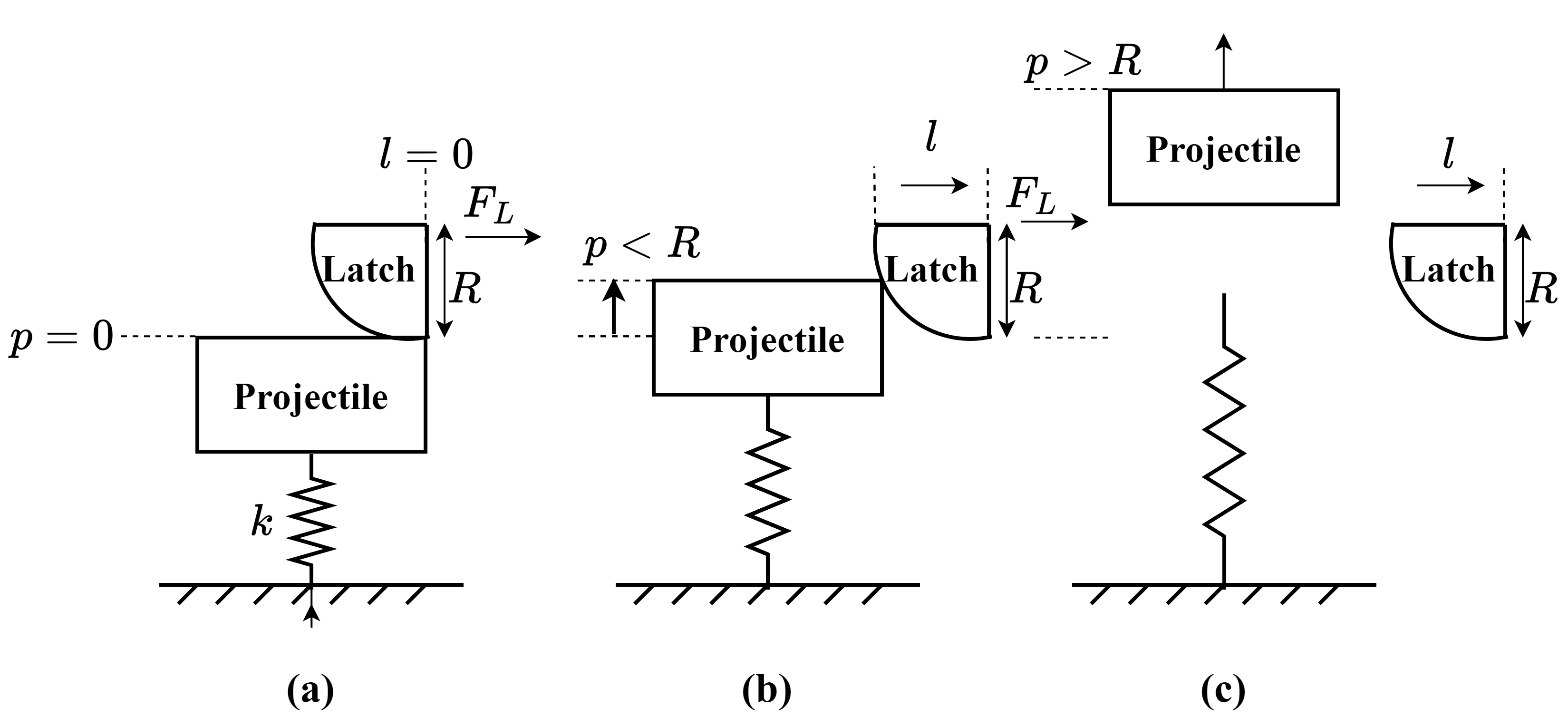}
\centering
\caption{The contact latch-based LaMSA model (a)~preloaded, (b)~in latched mode, (c)~unlatched: takeoff.}
\label{fig:contact_latch}
\end{figure}
The contact latch-based mechanism consists of two main components: a physical latch and a projectile held in place by a spring, as shown in Fig.~\ref{fig:contact_latch}.
% The system is modeled on a binary switching of constraints between a latched and unlatched mode.
Initially, the latch holds the projectile to store the energy of the spring (Fig.~\ref{fig:contact_latch}a). By applying the horizontal latch force, $F_L$, the latch releases the spring's potential energy and mediates the release (Fig.~\ref{fig:contact_latch}b). Once the latch is completely removed, the resultant potential energy is converted into the kinetic energy of the projectile, which provides an impulsive behavior (Fig.~\ref{fig:contact_latch}c). 

The notations of the spring position, $p$, latch position, $l$, are defined in Fig.~\ref{fig:contact_latch}, where $p_0$ indicates the spring's natural length, which is greater than the latch radius R.
% Fig.~\ref{fig:contact_latch}a shows a preloaded projectile in the latched phase with a spring of natural length $p_0>R$ and spring constant $k$. 
% As a positive pulling force $F_L$ is applied on the latch, it moves to the right, causing the projectile to move up as seen in Fig.~\ref{fig:contact_latch}b. Once the projectile is in the unlatched mode, i.e., $p>R$, it takes off, causing the impulse (Fig.~\ref{fig:contact_latch}c). 
The mathematical model of the generalized contact latch used in this paper is proposed by \cite{HO2023}, where the authors derive the dynamics based on constrained Lagrangian mechanics. We summarize the dynamic model here.

Let the state space be defined as $x:=(p,\dot{p}, l, \dot{l})$, where $p$ is the position of the projectile and $l$ is the position of the latch, and let $q=(p, l)$ and $\dot{q}=(\dot{p}, \dot{l})$. When the projectile and latch are physically constrained to each other, there exists a holonomic constraint $h(q)=0$, where
\begin{equation}\label{eq:hol}
  h(q)=l^2+(R-p)^2 -R^2.  
\end{equation}
% The system dynamics is defined by the state space $x:=(p,\dot{p}, l, \dot{l})$, where $p$ is the position of the projectile and $l$ is the position of the latch, 
Therefore, the system dynamics can be written as a switching system within two modes,
\begin{equation} \label{eq:model}
    \begin{bmatrix}
\dot{x}_1\\
 \dot{x}_2
\\ \dot{x}_3
\\ \dot{x}_4
\end{bmatrix}= \begin{bmatrix}
\dot{p}\\
\frac{1}{m}(F_s + (p-R)\lambda)\\ 
\dot{l}\\ 
\frac{1}{M}(F_L +l\lambda)
\end{bmatrix} :=f(x,F_L) 
\end{equation}
where,
\begin{eqnarray*}
\lambda &:=&\left\{\begin{matrix}
\tau(R,q,\dot{q},F_s,F_L)  \:\: \mbox{if}\: \:  h(q)=0 \: \: \mbox{and}\: \: \tau >0\: \: \mbox{(Latched)} \\
 0 \: \: \: \: \:\: \: \: \: \: \: \: \mbox{otherwise (Unlatched)}
\end{matrix}\right.
\\
F_s &:=& -k(p-p_0)
\end{eqnarray*}
such that $F_s$ is the Hookean spring force with a spring stiffness, $k$, and $m$ is the mass of the projectile, $M$ is the mass of the latch, $R$ is the latch radius, $F_L$ is the latch force, which is a control parameter, $\lambda$ is a contact force between the latch and the projectile, and $f(x, F_L)\in\mathbb{R}^{4}$ is the corresponding vector field. 
% that causes the switching between the two modes. 
The contact force can either be zero (no force between the projectile and latch) or positive (both bodies are pushing against each other). 
% Due to the coupling of the latch and spring, a non-zero constraint force $\lambda$ acts on both components, which captures the current mode of the system. 
% We can derive the equation for the constraint force by solving the constrained Lagrangian equation, which results in
We can derive the contact force, $\tau$, when the holonomic constraint is active using the constrained Lagrangian formulation in \cite{HO2023},
\begin{equation*}
    \tau = n_1(q,\dot{q},R) + g_1(q,R)F_s + g_2(q,R)F_L
\end{equation*}
where,
\begin{equation*}
\begin{split}
     n_1(q,\dot{q},R) &= -\left(\frac{(R-p)^2}{m}+\frac{l^2}{M}\right)^{(-1)} (\dot{p}^{2}+\dot{l}^{2}), \\
     g_1(q,R) &= -\left(\frac{(R-p)^2}{m}+\frac{l^2}{M}\right)^{(-1)}\frac{(p-R)}{m},\\
     g_2(q,R) &= -\left(\frac{(R-p)^2}{m}+\frac{l^2}{M}\right)^{(-1)}\frac{l}{m}.
\end{split}
\end{equation*}
Based on the activation of the constraints and the conditions such that contact force can be only positive (meaning that two bodies do not pull each other), we can define the Latched manifold as 
\begin{equation*}
    \mathrm{M}=\left\{ x \in \mathbb{R}^4 \:|\: h(q)=0 \wedge \tau>0 \right\}
\end{equation*}
and $\partial\mathrm{M}$ as the switching manifold.

Finally, there are two modes in the system, which we denoted as 
\begin{defin}[Latched Mode]
The system is in the latched mode when the projectile is in contact with the latch, and there exists non-negative force pushing against each other, namely $x\in\mathrm{M}$ 
\end{defin}
\begin{defin}[Unlatched Mode]
The system is in the unlatched mode when the projectile and latch are not pushing against each other, namely $x\notin\mathrm{M}$
\end{defin}
When the state $x$ hits the boundary of the latched manifold, $\partial \mathrm{M}$, the holonomic constraint in Eq.~(\ref{eq:hol}) no longer holds, switching the mode from a fully actuated 1 DoF system to an underactuated 2 DoF system. Therefore, the system dynamics is categorized as a multi-model multi-dimensional system $(\mbox{M}^3\mbox{D})$.
% This makes controlling the LaMSA system very difficult once the projectile takes off.

\textbf{Remark 1:} Along with the holonomic constraint, The model poses certain physical constraints on the projectile and latch position. When $p=R$, the constraint force does not affect the projectile. This behavior can be physically interpreted as when the projectile is tangential to the latch, no amount of physical force can hold the projectile in the latched position, leading to the takeoff condition. Thus, the projectile position is physically constrained within $0 \leq p \leq R$ to remain latched.  Following from Eq.~(\ref{eq:hol}), the latch is also constrained with $0 \leq l \leq R.$

% \textbf{Remark 2:} When the state $x$ hits the boundary of the latched manifold, $\partial \mathrm{M}$, the holonomic constraint in Eq.~(\ref{eq:hol}) no longer holds, switching the mode from a fully acutated 1 DoF system to an underactuated 2 DoF system. This makes controlling the LaMSA system very difficult once the projectile takes off.

\textbf{Remark 2:} This paper focuses on switching from the latched to unlatched mode, i.e., when the state $x$ crosses $\partial \mathrm{M}$. Currently, once the state cross the switching manifold, the projectile takes off with an impulse. In the future, this switching can be generalized to bring the projectile back to the latched mode by reloading the spring via external control, making the behaviour of the hybrid system more complex.

\textbf{Remark 3:} This paper uses the rounded latch design for the contact latch model. In \cite{DB2020}, multiple rounded latch designs were tested to provide various impulsive motions. In biology, different organisms have varied traits that can be classified under LaMSA. These traits can be studied using this generalized model by changing the latch type and other design parameters, such as $p_0$, $k$,$m$, etc.

\section{The Analysis of a saddle fixed point in Latched Mode} \label{sec:saddle}
This section derives the necessary and sufficient conditions of saddle fixed points in the Latched Mode of the system dynamics. Since the latch force, $F_L$ is the control parameter of the system, the equilibrium of the system will be a function of the latch force.
\subsection{Motivating Example}
A vital characteristic of the LaMSA mechanism is the ability of the latch to mediate the transfer of the spring energy. This is physically analyzed by varying the latch removal velocity. Quick removal of the latch leads to ultra-fast impulsive behavior, as seen in the mantis shrimp and other animal species. However, slower removal velocities lead to a loss of stored energy, making the projectile take off with a lower velocity.

In this paper, through the analysis of the fixed points, we conclude that the mediation of the latch is attributed to the movement of saddle nodes in the latched mode. In Proposition~\ref{prop:fixed}, we show that the system has two fixed points in the latched mode, one stationary at $p^*=0$ and another that moves based on the latch force $F_L$ exerted. By the results of Proposition~\ref{prop:FL}, the fixed points exist in the system when $F_L\leq 0$. Thus, by increasing the latch force, we see the fixed point moves closer to the stationary fixed point at $p^*=0$. In Lemma~\ref{lem:saddles}, we give the necessary and sufficient conditions for the fixed point to be a saddle-node. 

Section~\ref{sec:sims} presents a numerical simulation of the contact latch. We see the movement of the saddle fixed point with different $F_L$ values in Fig.~\ref{fig:fixedpoints}. It is observed that the saddle node pulls the trajectory towards its stable component and then launches the projectile till it unlatches and eventually takes off. The quiver arrows show the projectile trajectory change as the saddle moves for each $F_L$. Trajectories moving closer to the saddle regions are pulled closer to the fixed point, thus spending a portion of their total energy, resulting in reduced takeoff velocity. Therefore, the latch mediation can be attributed to this intrinsic property of the LaMSA system, i.e., through the moving saddle-node in its latched mode.
\subsection{Fixed points in the Latched Mode}
Suppose that the fixed point of the system in the Latch Mode is denoted as  $x^*=(p^*, \dot{p}^*, l^*, \dot{l}^*)$ with a constant $F^*_L$. The fixed points of the system are found from the state space described in Eq.~(\ref{eq:model}) by equating $f(x^*, F^*_L)=0$. The following proposition holds.

\vspace{12pt}
\begin{propos}
\label{prop:FL}
 If $x^*$ is a fixed point of the system with a constant latch force $F^*_L$ in Latched Mode, then $F_L^*$ must be non-positive. 
\end{propos}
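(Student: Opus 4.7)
The plan is to compute the fixed-point conditions directly from the vector field in Eq.~(\ref{eq:model}) and then read off the sign of $F_L^*$ from the latch equation alone, using the sign constraints that are built into the definition of Latched Mode together with Remark~1.

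First I would set $f(x^*,F_L^*)=0$ componentwise. The first and third components vanishing simply give $\dot p^*=0$ and $\dot l^*=0$. The interesting information comes from the fourth component: $\dot x_4=0$ forces
\begin{equation*}
F_L^* + l^*\lambda^* = 0,
\qquad\text{i.e.}\qquad F_L^* = -\,l^*\lambda^*.
\end{equation*}
Thus $F_L^*$ factors as a product of the equilibrium latch position and the equilibrium contact force, and its sign is determined entirely by these two quantities.

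Next I would invoke the two sign constraints. Because $x^*$ lies in the Latched Mode, the holonomic constraint $h(q^*)=0$ is active and, by the definition of the latched manifold $\mathrm{M}$, the contact multiplier satisfies $\lambda^*=\tau(R,q^*,\dot q^*,F_s^*,F_L^*)>0$. Moreover, by the physical admissibility condition in Remark~1, the latch coordinate is confined to $0\le l^*\le R$, so $l^*\ge 0$. Plugging these two inequalities into the identity $F_L^*=-l^*\lambda^*$ immediately gives $F_L^*\le 0$, which is the claim. I would also note, for completeness, the boundary case $l^*=0$: by $h(q^*)=0$ together with $0\le p^*\le R$, this forces $p^*=0$, and then the $\dot x_2=0$ equation yields $\lambda^*=kp_0/R>0$, which is consistent with latched mode and corresponds to $F_L^*=0$.

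There is really no hard step here; the proof is almost a one-line sign argument once the fourth equation is isolated. The only thing that requires care is not using more than what the hypotheses provide: I must resist using the $\dot x_2=0$ equation (which determines $\lambda^*$ in terms of $p^*$) for the main inequality, since the clean conclusion $F_L^*\le 0$ falls out of the $\dot x_4=0$ equation alone together with the built-in positivity of $\lambda^*$ on $\mathrm{M}$ and the physical range of $l^*$. The subtlest point, and the one I would flag explicitly in the write-up, is the justification for $\lambda^*>0$: it is not an additional hypothesis but is baked into the definition of Latched Mode via $\mathrm{M}=\{x\,|\,h(q)=0\wedge\tau>0\}$.
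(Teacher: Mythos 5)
Your proof is correct, but it takes a genuinely different route from the paper's. The paper never isolates the latch equation; instead it substitutes the explicit expression for the contact force $\tau$ (with $n_1=0$ at equilibrium) into the projectile balance $F_s+(p^*-R)\lambda=0$, which after clearing the common denominator collapses to $l^*\bigl(F_s\,l^*/M-(p^*-R)F_L^*/m\bigr)=0$; the sign of $F_L^*$ is then read off from $F_s\ge 0$ (since $p^*<p_0$) and $p^*-R\le 0$. You instead take the fourth component $F_L^*+l^*\lambda^*=0$ and conclude $F_L^*=-l^*\lambda^*\le 0$ directly from $\lambda^*=\tau>0$ on $\mathrm{M}$ and $l^*\ge 0$ from Remark~1 --- a one-line sign argument that is cleaner and arguably more transparent about \emph{why} the result holds (the latch force must oppose a positive contact force acting through a non-negative moment arm). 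What the paper's longer computation buys is the relation in Eq.~(\ref{eq:fp}), which is not just a vehicle for the sign of $F_L^*$ but is reused immediately in Corollary~\ref{col:fp} and Proposition~\ref{prop:fixed} to locate and count the fixed points; your argument would still need that identity to be derived separately for the later results. Your handling of the boundary case $l^*=0$ (forcing $p^*=0$ and checking $\lambda^*=kp_0/R>0$ for consistency with the strict inequality $\tau>0$ in the definition of $\mathrm{M}$) is careful and correct, and is a point the paper glosses over.
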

\begin{proof}
Suppose that $F_L^*\in\mathbb{R}$ is fixed. We begin by deriving the equilibrium points $x^*=(p^*,\dot{p}^*,l^*,\dot{l}^*)$, which satisfy the condition in Eq.~(\ref{eq:model}): \[ \dot{p}^*=0, \:\: \dot{l}^*=0, \: \: \frac{1}{m}\left( F_s + (p^*-R) \lambda\right)=0,
\]
%where,\\
 %$\lambda= -(\frac{(R-p^*)^2}{m}+\frac{l^{*2}}{M})^{(-1)}\left ( \dot{p}^{*2}+\dot{l}^{*2} +\frac{(p^*-R)}{m}F_s +\frac{l^*}{m}F_L^*\right )$. 
 Thus,
\begin{equation*}
%\begin{split}
%&F_s(\frac{R^2-(R-p^*)^2}{M}) - (p^*-R) \frac{\sqrt{R^2-(R-p^*)^2}}{m}F_L^* =0,\\
\sqrt{R^2-(R-p^*)^2} (   F_s\frac{\sqrt{R^2-(R-p^*)^2}}{M} - \frac{p^*-R}{m}F_L^*)=0.
%\end{split}
\end{equation*}
This implies that 
 \begin{equation}
     p^*=0,\:\: p^*=2R \:\: \mbox{or} \:\: F_s(\frac{\sqrt{R^2-(R-p^*)^2}}{M})=\frac{(p^*-R)}{m}F_L^* \label{eq:fp}
 \end{equation} where $F_s=-k(p^*-p_0)$ and $p^*< p_0$. As our analysis is restricted to the latched mode, we do not consider $p^*= 2R$ as a fixed point as it is outside our region of interest. Then,
 \[ F_s(\frac{\sqrt{R^2-(R-p^*)^2}}{M})\geq 0 \] Thus, for Eq.~(\ref{eq:fp}) to hold, we must have 
 \begin{equation}
 \label{eq:FL_nonpositive}
     F_L^* \leq 0.  
 \end{equation} 
 \end{proof}
 \begin{corol} \label{col:fp}
     When the latch force $F_L^*=0$ in the latched mode, the corresponding fixed point is $p^*=0$.
 \end{corol}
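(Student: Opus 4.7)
The plan is to specialize the characterization of fixed points already established in the proof of Proposition~\ref{prop:FL} to the case $F_L^*=0$ and eliminate all candidate solutions except $p^*=0$. Since Proposition~\ref{prop:FL} already decomposes the equilibrium condition into the three alternatives in Eq.~(\ref{eq:fp}), the work reduces to inspecting which alternatives survive when the latch force vanishes.

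First I would substitute $F_L^*=0$ into the third alternative in Eq.~(\ref{eq:fp}), which reduces to
\begin{equation*}
F_s\,\frac{\sqrt{R^2-(R-p^*)^2}}{M}=0.
\end{equation*}
Since $F_s=-k(p^*-p_0)$ and the latched mode imposes $p^*\le R<p_0$ (by Remark~1), we have $F_s>0$, so the only way this equation can hold is $\sqrt{R^2-(R-p^*)^2}=0$, which forces $p^*\in\{0,2R\}$.

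Next I would combine these with the other two candidates $p^*=0$ and $p^*=2R$ already listed in Eq.~(\ref{eq:fp}), so that overall the only possibilities are $p^*=0$ or $p^*=2R$. Invoking Remark~1 once more, the latched manifold requires $0\le p\le R$, so $p^*=2R$ lies outside the region of interest and is discarded. This leaves $p^*=0$ as the unique fixed point in the latched mode when $F_L^*=0$, completing the argument.

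The statement is essentially a direct corollary of the case analysis inside the proof of Proposition~\ref{prop:FL}, so there is no real obstacle beyond bookkeeping; the only subtle point is to justify why $F_s\ne 0$ at a latched equilibrium, which follows cleanly from $p^*\le R<p_0$ together with the sign convention $F_s=-k(p^*-p_0)$. No additional computation on $\dot{p}^*,\dot{l}^*,l^*$ is needed because the first-row and third-row conditions in Eq.~(\ref{eq:model}) already pin $\dot{p}^*=\dot{l}^*=0$, and $l^*$ is then determined from the holonomic constraint $h(q^*)=0$ with $p^*=0$, giving $l^*=0$.
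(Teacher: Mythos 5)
Your proposal is correct and takes essentially the same route as the paper: substitute $F_L^*=0$ into the third alternative of Eq.~(\ref{eq:fp}), use $F_s>0$ (since $p^*<p_0$) to force $\sqrt{R^2-(R-p^*)^2}=0$, and discard $p^*=2R$ as lying outside the latched region. The paper's own proof is simply a terser version of this argument, and your extra bookkeeping (explicitly excluding $2R$ and noting $l^*=0$ from the holonomic constraint) only makes it more complete.
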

 \begin{proof}
     The fixed point has to satisfy Eq.~\ref{eq:fp}, thus when $F_L=0$,
     \[ F_s(\frac{\sqrt{R^2-(R-p^*)^2}}{M})= 0 \]
which implies that $p^*=0$, as $F_s >0$.     
 \end{proof}
 The result of Proposition~\ref{prop:FL} matches with our intuition that we must push the latch towards the projectile to balance the spring force (Fig~\ref{fig:contact_latch}a).
 \vspace{12pt}
 \begin{propos} \label{prop:fixed}
     In the Latched Mode, if $F_L^*<0$, then the system has two real fixed points, 
     \begin{equation}
    E_{F^*_L} := \{(0,0,0,0)^\top, (p^*, 0, l^*, 0)^\top\}\subset \mathrm{M}
\end{equation}
where $p^*$ $\in (0, R)$ and $l^*$ $\in (0, R)$ satisfies Eq.~(\ref{eq:hol}) and Eq.~(\ref{eq:fp}).
  \end{propos}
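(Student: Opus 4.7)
The plan is to solve $f(x^*,F_L^*)=0$ from Eq.~(\ref{eq:model}) by reusing the scalar reduction already carried out in the proof of Proposition~\ref{prop:FL}, and then to apply the Intermediate Value Theorem to locate the non-trivial root. Setting the right-hand side of Eq.~(\ref{eq:model}) to zero immediately gives $\dot{p}^*=\dot{l}^*=0$ from the kinematic rows, and the trichotomy in Eq.~(\ref{eq:fp}) leaves three algebraic cases for $p^*$. The branch $p^*=2R$ is discarded as lying outside the latched range $[0,R]$; the branch $p^*=0$ combined with the holonomic constraint Eq.~(\ref{eq:hol}) forces $l^*=0$ and produces the first candidate $(0,0,0,0)^\top$.

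For the third branch I introduce
\[
\phi(p):=\frac{k(p_0-p)}{M}\sqrt{p(2R-p)}+\frac{R-p}{m}F_L^*, \qquad p\in[0,R],
\]
so $\phi(p^*)=0$ is exactly the third alternative in Eq.~(\ref{eq:fp}). Using the modeling assumption $p_0>R$, which keeps $F_s(p)>0$ throughout $[0,R]$, together with the hypothesis $F_L^*<0$, a direct evaluation yields $\phi(0)=RF_L^*/m<0$ and $\phi(R)=kR(p_0-R)/M>0$. Continuity of $\phi$ and the IVT then deliver a root $p^*\in(0,R)$, and the holonomic constraint forces $l^*=\sqrt{p^*(2R-p^*)}\in(0,R)$. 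To confirm both candidates lie in $\mathrm{M}$ I verify $\tau>0$: at the origin, $n_1$ and $g_2 F_L^*$ vanish while $g_1 F_s$ reduces to $kp_0/R>0$; at the non-trivial point, the second row of Eq.~(\ref{eq:model}) at equilibrium reads $F_s(p^*)+(p^*-R)\lambda=0$, giving $\tau=\lambda=F_s(p^*)/(R-p^*)>0$ because $p^*\in(0,R)\subset(0,p_0)$.

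The main obstacle is the implicit uniqueness claim, namely that $\phi$ has no further roots in $(0,R)$ so that $E_{F_L^*}$ is exhaustive. The factor $A(p):=(k(p_0-p)/M)\sqrt{p(2R-p)}$ is a product of a decreasing and an increasing function and is not globally monotone; differentiation shows that $A'$ is a rational expression whose quadratic numerator has a single root $p_{\max}\in(0,R)$, so $A$ rises then falls on the interval. On $(0,p_{\max})$, $\phi'=A'-F_L^*/m$ is the sum of two positive terms, ruling out multiple roots there. Handling the tail $(p_{\max},R)$, where $A'$ and $-F_L^*/m$ contribute with opposite signs, is the only step that does not reduce to a sign inspection, and the natural route is a magnitude comparison between $|A'|$ and $-F_L^*/m$ on that sub-interval in the admissible parameter regime.
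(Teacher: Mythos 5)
Your existence argument is sound and in one respect cleaner than the paper's: by working directly with the unsquared function $\phi(p)=\tfrac{k(p_0-p)}{M}\sqrt{p(2R-p)}+\tfrac{R-p}{m}F_L^*$ you avoid the spurious roots introduced by squaring (the paper must separately remark that $D(p^*)=0$ from Eq.~(\ref{eq:fixedpoint}) is only a necessary condition), and the endpoint signs $\phi(0)=RF_L^*/m<0$ and $\phi(R)=kR(p_0-R)/M>0$ correctly use both hypotheses $F_L^*<0$ and $p_0>R$. The explicit check that $\tau>0$ at both candidates, so that $E_{F_L^*}\subset\mathrm{M}$, is a step the paper's proof omits entirely and is worth keeping.

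The genuine gap is the uniqueness of the root in $(0,R)$, which you flag but leave open. This is not optional: the proposition asserts \emph{exactly} two fixed points, and Section~\ref{sec:bif} tracks ``the'' moving fixed point as a single branch $p^*(F_L)$, so multiplicity must be excluded. The paper closes it by a cascade on the quartic $D$: $D'''<0$ on $(0,R)$ makes $D''$ strictly decreasing, $D''(0)<0$ then gives $D''<0$, so $D'$ is strictly decreasing with $D'(0)>0>D'(R)$, hence $D$ is unimodal and $D(0)<0<D(R)$ forces a unique root; writing $D=\phi\cdot(A+B)$ with $B(p)=\tfrac{R-p}{m}|F_L^*|$ and $A+B>0$ on $(0,R)$ transfers this to $\phi$. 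Your proposed repair --- comparing $|A'|$ with $-F_L^*/m$ on $(p_{\max},R)$ --- does not go through as stated: $|A'|$ sweeps from $0$ at $p_{\max}$ up to $kR/M$ at $R$, while $-F_L^*/m$ can be any positive number, so neither inequality holds uniformly on that subinterval. A fix that stays inside your framework and also subsumes your case split at $p_{\max}$: with $s(p)=\sqrt{p(2R-p)}$ one has $s''=-R^2/s^3<0$ and $s'=(R-p)/s>0$ on $(0,R)$, so $A''=\tfrac{k}{M}\bigl((p_0-p)s''-2s'\bigr)<0$; hence $\phi$ is strictly concave on $(0,R)$, and a strictly concave function with $\phi(0)<0<\phi(R)$ cannot have two zeros in $(0,R)$ (it would be negative to the right of the second). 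That one observation replaces both your tail analysis and the paper's three-derivative cascade.
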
 
 \begin{proof}
     
  From Eq.~(\ref{eq:fp}), we get the fixed point $p^*=0$, which is the first fixed point in the latched mode. In order to solve for $p^*$, we square Eq.~(\ref{eq:fp}) on both sides to get,
  \[{k^2(p_0^2-2p^*p_0+p^{*2})}\frac{2Rp^*-p^{*2}}{M^2} - \frac{p^{*2}+R^2-2Rp^*}{m^2}F_L^{*2}=0, \] which on expanding yields the fourth order equation,
\begin{equation} \label{eq:fixedpoint}
\begin{split}
    &D(p^*)=\frac{-k^2}{M^2}p^{*4}+(\frac{2Rk^2}{M^2}+\frac{2k^2p_0}{M^2})p^{*3}-(\frac{F_L^{*2}}{m^2}+\frac{k^2p_0^2}{M^2}\\
    &+\frac{4Rk^2p_0}{M^2})p^{*2} +(\frac{2F_L^{*2}R}{m^2}+\frac{2Rk^2p_0^2}{M^2})p^* - \frac{F_L^{*2}R^2}{m^2}=0.
\end{split}
\end{equation}
By squaring Eq.~(\ref{eq:fp}), the condition that $D(p^*)=0$ becomes a necessary condition for $p^*$ to be a fixed point of the system but is not sufficient, as $F_L^*$ may not always satisfy Proposition \ref{prop:FL}. Applying Descartes' Rule of Signs, we see that $D(p)$ can have either four real positive roots, a combination of two positive real roots with a complex conjugate, or two pairs of complex conjugate roots.

As $D(p^*)$ is a polynomial and thus is continuous in $\mathbb{R}$, we apply the Intermediate Value Theorem (IVT) in the interval $[0,2R]$ where,
\begin{equation*}
    \begin{split}
        &D(0)= \frac{-F_L^2R^2}{m^2}<0\\
        &D(R)=\frac{k^2R^2(R-p_0)^2}{m^2}>0\\
        &D(2R)= \frac{-F_L^2R^2}{m^2}<0\\
    \end{split}
\end{equation*}
Therefore, by IVT $\exists$  $p^* \in (0, R)$ such that $D(p^*)=0$ and $\exists$  $p^{*} \in (R, 2R)$ such that $D(p^{*})=0$.

Following the Complex Conjugate Root theorem, $D(p)$ has either (i) four real positive roots where at least one root $p^*_1\in (R,2R)$ and a maximum of three roots $\{p^*_2, p^*_3,p^*_4\}\subset (0,R)$  or (ii) a combination of two real positive roots and a complex conjugate pair with one root $p_4^*\in (0,R)$ and one root $p_5^*\in (R,2R)$. 

% To verify the possibility of case (i), we perform a graphical analysis on $D(p^*)$. 
In the rest of the proof, we will prove that there must exist only one $p^*$ in $(0, R)$. We only need to check for case (i) to prove this. First, observe that the coefficient with the highest order in $D(p^*)$ is negative. If the stationary point, which makes $D^\prime(p^*)=0$ within $(0,R)$, exists at most once, then there exists only one $p^*$ in $(0,R)$ such that $D(p^*)=0$.

We check $D^\prime(p^*)$ evaluated at $p^*=0$ and $p^*=R$, 
% If  $D(p^*)$ has three real roots in the interval $(0, R)$, then it must have three extrema as the leading coefficient of the polynomial is negative, which implies that $D^{'}(p^*)$ must have two extrema in the interval. Similarly, $D^{''}(p^*)$ must have one extrema. Looking into the derivative $D^{'}(p^*)$, we get
\begin{equation*}
\begin{split}
D^{'}(p^*)=& \frac{-4k^2}{M^2}p^{*3}+(\frac{2Rk^2}{M^2}+\frac{2k^2p_0}{M^2})3p^{*2}-(\frac{F_L^2}{m^2}+\frac{k^2p_0^2}{M^2}\\
    &+\frac{4Rk^2p_0}{M^2})3p^{*} +(\frac{2F_l^2R}{m^2}+\frac{2Rk^2p_0^2}{M^2})
    \end{split}.
\end{equation*}
Applying IVT for $D^{'}(p^*)$ with
\begin{equation*}
    \begin{split}
        &D^{'}(0)= \frac{2F_l^2R}{m^2}+\frac{2Rk^2p_0^2}{M^2}>0\\
        &D^{'}(R)=\frac{2k^2R^2(R-p_0)}{m^2}<0\\
    \end{split}
\end{equation*}

we have at least one root of $D^\prime(p^*)=0$ in $(0,R)$. Furthermore, if there is at most one stationary point for $D^{\prime\prime}(p^*)$ in $(0, R)$, then there only exists one $p^*$ such that $D^\prime(p^*)=0$.

% As $D^{'}(p^*)$ has a negative starting coefficient, the polynomial can have either two extrema or one extremum in the interval $(0,R)$.

Therefore, by taking the second derivative, we get,
\begin{equation*}
\begin{split}
D^{''}(p^*)=& \frac{-12k^2}{M^2}p^{*2}+(\frac{2Rk^2}{M^2}+\frac{2k^2p_0}{M^2})6p^{*}-(\frac{F_L^2}{m^2}\\
&+\frac{k^2p_0^2}{M^2}+\frac{4Rk^2p_0}{M^2}) 
    \end{split}
\end{equation*}
% Applying IVT for $D^{'}(p^*)$, we get,
Now observe that, 
\begin{equation*}
    \begin{split}
        &D^{''}(0)= -(\frac{F_L^2}{m^2}+\frac{k^2p_0^2}{M^2}+\frac{4Rk^2p_0}{M^2})<0\\
        % &D^{''}(R)=\frac{2k^2p_0(2R-p_0)}{m^2}-\frac{2F_L^2}{m^2}.\\
    \end{split}
\end{equation*}
Furthermore, the coefficient of the highest order is negative. Therefore, if there are no stationary point for $D^{\prime\prime\prime}(p^*)$ in (0,R), then $D^{\prime\prime}(p^*)=0$ will have at most one root. 

Now, by taking the third derivative and evaluating at $p^*=0$ and $R$, we have 
\begin{equation*}
    \begin{split}
    & D^{\prime\prime\prime}(p^*)=\frac{-24k^2}{M^2}p^{*}+\frac{12k^2(R-p_0)}{M^2}\\
        &D^{\prime\prime\prime}(0)= \frac{12k^2(R-p_0)}{M^2}<0\\
        &D^{\prime\prime\prime}(R)= -\frac{12k^2(R+p_0)}{M^2}<0\\
    \end{split},
\end{equation*}
since $p_0>R$, which implies that $D^{\prime\prime\prime}(p)<0$ for all $p\in(0,R)$.
Therefore, there is no stationary point for $D^{\prime\prime}(p^*)$ in $(0,R)$, which implies that there exists only one $p^*$ such that $D^\prime(p^*)=0$. Hence, there must only be one $p^*$ in $(0,R)$ satisfying Eq.~(\ref{eq:fp}).

% Finally, we can define a set of fixed points in Latched Mode for each $F^*_L\leq 0$, such that 
% \begin{equation}
%     E_{F^*_L} := \{(0,0,0,0)^\top, (p^*, 0, l^*, 0)^\top\}
% \end{equation}
% where $p^*$ $\in (0, R)$ and $l^*$ $\in (0, R)$ satisfies Eq.~(\ref{eq:fp}) with $F^*_L$ and $h(q)=0$. 
% focus on the pair $(p^*,F_L^*)$.

% Depending on the design choices, $D^{''}(R)>0$ or $D^{''}(R)\leq 0$. In both cases, $D^{''}(R)$ will have at most one zero crossing, i.e., one root. Therefore, $D^{'}(p^*)$ will have one extremum. Following this argument, $D(p^*)$ can only have one real root in the interval (0,R).

% The fixed point value $l^*$ can be found from Eq.~(\ref{eq:hol}) such that \[
% l^*=\sqrt{R^2-(R-p^*)^2}
% \]
% Thus, for each root $p^*$, the corresponding fixed point is \[
% x^*=(p^*,0,l^*,0)
% \] 
% \end{proof}

% % Each $p^*$ depends on the latched force $F_L$, which is an external control force applied on a given system with set parameters. Thus, each $x^*$ tuple has a corresponding $F_L^*$ to make it a fixed point of the system. 
% \begin{equation*}
%     E_{F_L^*}:=\begin{bmatrix}
%         0,& 0, &0,&0
%     \end{bmatrix}\subset \mathrm{M}\end{equation*}
    
%     and\begin{equation*}
%         E_{F_L^*}:=\begin{bmatrix}
%         p^*,& 0,& l^*,&0
%     \end{bmatrix}\subset \mathrm{M}
% \end{equation*}
% where $p^*$ $\in (0, R)$ and $l^*$ $\in (0, R)$ satisfies Eq.~(\ref{eq:fp}) with $F^*_L$.
\end{proof}

\subsection{Necessary and Sufficient conditions for a saddle-node in Latched Mode.}
To investigate the stability at a given fixed point, $p^*\in (0, R)$, and the $F^*_L$ satisfying Eq.~(\ref{eq:fp}), we linearize the nonlinear vector field $f$ defined in Eq.~(\ref{eq:model}) over $x$,  
% and movement of the fixed point, we look into the linearized system at each fixed point and the respective eigenvalues.\\
% This matrix is calculated by taking the partial derivative of each state of the model with each independent state and takes the following form,
% \begin{equation} \label{eq:jacobian}
% J=\begin{bmatrix}
% \frac{\partial f_1}{\partial p}  & \frac{\partial f_1}{\partial \dot{p}}  & \frac{\partial f_1}{\partial l}  &\frac{\partial f_1}{\partial \dot{l}}  \\ 
% \frac{\partial f_2}{\partial p}  & \frac{\partial f_2}{\partial \dot{p}}  & \frac{\partial f_2}{\partial l}  &\frac{\partial f_2}{\partial \dot{l}} \\ 
% \frac{\partial f_3}{\partial p}  & \frac{\partial f_3}{\partial \dot{p}}  & \frac{\partial f_3}{\partial l}  &\frac{\partial f_3}{\partial \dot{l}}\\
% \frac{\partial f_4}{\partial p}  & \frac{\partial f_4}{\partial \dot{p}}  & \frac{\partial f_4}{\partial l}  &\frac{\partial f_4}{\partial \dot{l}}
% \end{bmatrix}=\begin{bmatrix}
% 0 & 1 & 0  &0  \\ 
% \mathrm{A} & 0 &\mathrm{B} & 0\\
% 0& 0& 0& 1\\
% \Gamma & 0 & \Delta& 0
% \end{bmatrix}
% \end{equation}
\begin{equation} \label{eq:jacobian}
J=\frac{\partial f(x,F_L)}{\partial x}|_{(x,F_L)=(x^*, F_L^*) }=\begin{bmatrix}
0 & 1 & 0  &0  \\ 
\mathrm{A} & 0 &\mathrm{B} & 0\\
0& 0& 0& 1\\
\Gamma & 0 & \Delta& 0
\end{bmatrix}
\end{equation}
where $\mathrm{A}$, $\mathrm{B}$, $\Gamma$, $\Delta \in \mathbb{R}$ are defined as follows:

\begin{equation*}
\begin{split}
    &\mathrm{A}=\frac{-1}{m^2}(km+\frac{M(F_L^*l+2kp_0p^*-k(3p^{*2}+4pR-2p_0R+R^2)}{l^{*2}m+M(p^*-R)^2}\\
    &+\frac{2M^2(p^*-R)( R(-F_L^*l^*+kp_0R) +p^*(F_L^*l^*+k(2p_0-R)R) }{(l^{*2}m+M(p^*-R)^2)^2}\\
    &+\frac{ p^{*2}(k(p_0-2R))- p^{*3}k)}{(l^{*2}m+M(p^*-R)^2)^2}),
\end{split}
\end{equation*}
\[\mathrm{B}=\frac{F_L^*M(l^{*2}m-M(p^*-R)^2)(p^*-R)}{m(l^{*2}m+M(p^*-R)^2)^2)},\]
\begin{equation*}
    \begin{split}
        \Delta&=\frac{(F_L^*l^*(m-2l^{*2}m-2M(p^*-R)^2)}{(l^{*2}m+M(p^*-R)^2)^2}\\
        &+k(p_0-p^*)(-l^{*2}m+M(p^*-R)^2)(p^*-R))
    \end{split}
\end{equation*}
\normalsize
  and 
  \begin{equation*}
      \begin{split}
          \Gamma=&\frac{1}{(l^{*2}m+M(p^*-R)^2)^2}(l^*(2F_L^*l^*mM(p^*-R)\\
          &+k(2mM(-p^*+p_0)+mM(2p^*-p_0-R))(p^*-R)^2\\
          &+kl^{*2}m^2(2p^*-p_0-R)))
      \end{split}
  \end{equation*}
  \normalsize
Now, by constructing the characteristic equation for the linearized system, we have 
% To comment on the stability of each fixed point, we substitute $(p,F_L)=(p^*,F_L^*)$ in Eq.~(\ref{eq:jacobian}) and find the characteristic equation of the matrix as follows:
\[
|\lambda I-J|=\left|\begin{bmatrix}
\lambda & -1 & 0  &0  \\ 
-\mathrm{A} & \lambda &-\mathrm{B} & 0\\
0& 0& \lambda& -1\\
-\Gamma & 0 & -\Delta& \lambda
\end{bmatrix}\right|=0
\]
which results in \[\lambda(\lambda(\lambda^2-\Delta)+\mathrm{B}(0)) +1(-\mathrm{A}(\lambda^2-\Delta)+\mathrm{B}(-\Gamma)+0)=0.\]
Simplifying the expression leads to,
\begin{equation}\label{eq:eigen}
\lambda^4 -(\mathrm{A}+\Delta)\lambda^2 +(\mathrm{A}\Delta-\mathrm{B}\Gamma)=0.\\
\end{equation}
The following lemma shows the necessary and sufficient condition for the system to have a saddle fixed point. 
\\
\begin{lemm} \label{lem:saddles}
The fixed point $(p^*,F_L^*)$ of the system in the Latched Mode satisfying Eq.~(\ref{eq:FL_nonpositive}-\ref{eq:fixedpoint}) is a saddle if and only if  $(p^*,F_L^*) \in S$ where,
\[
S=\left\{(p^*,F_L^*)\:| h_1<0 \vee (h_1 \geq0 \wedge h_2>0) \right\}
\] 
and $h_1=( \mathrm{A}\Delta-\mathrm{B}\Gamma)$ and $h_2=(\mathrm{A}+\Delta)$.
% , $(p^*,F_L^*)$ 
% is a saddle if $(p^*,F_L^*) \in S$ where,
% \[
% S=\left\{(p^*,F_L^*)\:| h_1<0 \vee (h_1 \geq0 \wedge h_2>0) \right\}
% \]
\end{lemm}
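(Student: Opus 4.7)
The plan is to analyze the characteristic polynomial in Eq.~(\ref{eq:eigen}) via the biquadratic substitution $\mu=\lambda^2$, reducing it to $\mu^2 - h_2\mu + h_1 = 0$. By Vieta's formulas $\mu_1+\mu_2=h_2$ and $\mu_1\mu_2=h_1$, and the Jacobian's spectrum is recovered as $\{\pm\sqrt{\mu_1},\pm\sqrt{\mu_2}\}$. The $\pm$ pairing (a hallmark of the mechanical/Hamiltonian-like structure inherited from the block form of $J$) means the fixed point is a saddle precisely when at least one $\mu$-root produces a $\sqrt{\mu}$ with strictly positive real part; the companion eigenvalue $-\sqrt{\mu}$ then automatically supplies the opposite sign, giving a stable/unstable decomposition.

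For sufficiency I would split on the two disjuncts defining $S$. If $h_1<0$, the discriminant $h_2^2-4h_1>h_2^2\geq 0$ guarantees real $\mu$'s, and $\mu_1\mu_2=h_1<0$ forces one of them to be positive, producing a real eigenvalue pair $\pm\sqrt{\mu}$ of opposite signs. If instead $h_1\geq 0$ and $h_2>0$, then $\mu_1+\mu_2>0$ and $\mu_1\mu_2\geq 0$, so the $\mu$'s are either both real with at least one strictly positive, or form a complex-conjugate pair with positive real part $h_2/2$; in either case the principal-root identity $\mathrm{Re}(\sqrt{a+bi})=\sqrt{\tfrac{1}{2}(a+\sqrt{a^2+b^2})}$ gives $\mathrm{Re}(\sqrt{\mu})>0$, so a saddle is obtained.

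For necessity I would argue by contrapositive: assume $(p^*,F_L^*)\notin S$, i.e., $h_1\geq 0$ and $h_2\leq 0$. Then $\mu_1\mu_2\geq 0$ and $\mu_1+\mu_2\leq 0$ force both (real) $\mu$'s to be non-positive, so every eigenvalue $\lambda=\pm\sqrt{\mu}$ is purely imaginary and the linearization is a center rather than a saddle. The boundary case $h_1=0$ gives one $\mu=0$ and hence a zero eigenvalue, which is the degenerate saddle-node boundary consistent with the bifurcation discussion in Section~\ref{sec:saddle}.

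The main obstacle is the complex-root sub-case in the necessity argument: strictly, $h_1\geq 0$ and $h_2\leq 0$ do not by themselves force the discriminant $h_2^2-4h_1\geq 0$, so a saddle-focus (complex quadruple $\pm\alpha\pm i\beta$) could in principle slip through and violate ``only if''. Closing this gap requires invoking the explicit physical expressions for $\mathrm{A}$, $\mathrm{B}$, $\Gamma$, $\Delta$ at fixed points satisfying Eq.~(\ref{eq:fp}) to verify that the discriminant is indeed non-negative in this regime --- this sign check, rather than the Vieta bookkeeping, will be the technical heart of the proof.
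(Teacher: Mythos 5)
Your route is essentially the paper's: the paper also works from the biquadratic structure of Eq.~(\ref{eq:eigen}), writing it as $(\lambda^2-r_1)(\lambda^2-r_2)=0$ with $r_1+r_2=h_2$ and $r_1r_2=h_1$ in the converse direction, and using Descartes' Rule of Signs on $\lambda^4-b\lambda^2\pm c$ in the forward direction. So the Vieta bookkeeping on $\mu^2-h_2\mu+h_1=0$ is the same argument in different clothing, and your sufficiency cases line up with the paper's three cases.

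Where you differ is that you are more careful than the paper about the complex-$\mu$ sub-cases, and the ``main obstacle'' you flag at the end is a genuine issue that the paper does not close either. Concretely: when $h_1>0$, $h_2>0$ and $h_2^2-4h_1<0$, the paper asserts via Descartes that there are ``two positive and two negative roots,'' but in fact the spectrum is a complex quadruple $\pm\alpha\pm i\beta$ with $\alpha>0$ (a saddle-focus); Descartes only bounds the number of positive real roots from above, so the paper's sufficiency case~2 is only valid under the reading of ``saddle'' as a hyperbolic point with both stable and unstable directions --- which is exactly the reading your $\mathrm{Re}(\sqrt{\mu})>0$ computation supports. Symmetrically, in the necessity direction the paper enumerates only saddle configurations possessing a real eigenvalue pair and silently omits the saddle-focus; as you observe, if $h_1>0$, $h_2<0$ and $h_2^2-4h_1<0$, the point lies outside $S$ yet is still a saddle in the stable/unstable sense, so the ``only if'' fails unless the explicit expressions for $\mathrm{A}$, $\mathrm{B}$, $\Gamma$, $\Delta$ at points satisfying Eq.~(\ref{eq:fp}) exclude that regime. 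Neither you nor the paper carries out that sign check, so your proposal is not more incomplete than the published proof --- but you should either perform the discriminant verification you propose, or state explicitly which definition of saddle is in force, since the lemma as written is only an exact equivalence under one of the two conventions.
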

\begin{proof}
Let $(p^*,F_L^*)$ be a fixed point of the system, then $(p^*,F_L^*)$ satisfies Eq.~(\ref{eq:fixedpoint}) $\forall F_L \leq 0$. 

$\Leftarrow$ Let $(p^*,F_L^*) \in S$. We can comment on the type of fixed point by using Descartes' Rule of Signs for the following cases:
\begin{enumerate}
    \item $h_1<0$ : The characteristic equation in Eq.~(\ref{eq:eigen}) takes the form \[
    \lambda^4-b\lambda^2-c=0 
    \]
    where $c>0$. In all cases when $b\geq0$ or $b<0$, the polynomial has one positive root and one negative root, as there is one sign change in each case, making it a saddle with a pair of eigenvalues with opposite sign.

    \item $h_1 >0 \wedge h_2>0$: The characteristic equation in Eq.~(\ref{eq:eigen}) takes the form \[
    \lambda^4-b\lambda^2+c=0 
    \]
    where $b>0$ and $c>0$. This polynomial has two positive and two negative roots, as there are two sign changes in each case, making it a saddle with two pairs of eigenvalues with opposite signs.
   
    \item $h_1 =0 \wedge h_2>0$: The characteristic equation in Eq.~(\ref{eq:eigen}) takes the form \[
    \lambda^4-b\lambda^2=0 
    \]
    where $b>0$. This polynomial has two roots at zero, one positive root, and one negative root, as there is one sign change in each case, making it a saddle with one pair of eigenvalues with opposite signs.
\end{enumerate}

$\Rightarrow$ Let the fixed point $(p^*,F_L^*)$ be a saddle. Comparing to Eq.~(\ref{eq:eigen}), the general structure of the characteristic equation for the saddle is $(\lambda^2-r_1)(\lambda^2-r_2)=0$ where $r_1,r_2$ are the roots, which is equivalent to 
\[
    \lambda^4-(r_1+r_2)\lambda^2+r_1r_2=0.
\]
Note that as the coefficients of $\lambda^3$ and $\lambda$ in Eq.~(\ref{eq:eigen}) is zero, we will not have saddle cases with real eigenvalues of different magnitude with opposing signs i.e., $ \lambda_1 < 0 < \lambda_2$ where $|\lambda_1|\neq |\lambda_2|$.
We comment on the saddle in the following cases:
\begin{enumerate}
     \item There is at least one pair of real eigenvalues with opposite signs. Here, $(r_1<0 \wedge r_2>0)\vee (r_1>0 \wedge r_2<0) \implies r_1\cdot r_2<0.$ Thus, $\mathrm{A}\Delta-\mathrm{B}\Gamma<0$.  
    
    \item  There are two pairs of real eigenvalues with opposing signs. Here, $(r_1>0 \wedge r_2>0)\vee(r_1<0 \wedge r_2<0) \implies r_1\cdot r_2>0 $. But, $r_1+r_2>0 \implies (\mathrm{A}\Delta-\mathrm{B}\Gamma>0)\wedge(\mathrm{A}+\Delta>0)$.

    % \item $(\mathrm{A}\Delta-\mathrm{B}\Gamma>0)\wedge(\mathrm{A}+\Delta=0): r_1\cdot r_2>0 \implies (r_1>0 \wedge r_2>0)\vee(r_1<0 \wedge r_2<0)$, but $\mathrm{A}+\Delta=0 \implies r_1=-r_2$ which is a contradiction as $-r_1^2<0$.

   % \item $(\mathrm{A}\Delta-\mathrm{B}\Gamma>0)\wedge(\mathrm{A}+\Delta<0): r_1\cdot r_2>0 \implies (r_1>0 \wedge r_2>0)\vee(r_1<0 \wedge r_2<0)$, but $\mathrm{A}+\Delta<0 \implies r_1+r_2<0$. Thus, both  $r_1<0$ and $r_2<0$ lead to two sets of complex conjugate eigenvalues (not a saddle).
    
    \item There is one pair of real eigenvalues with opposing signs, and the other pair of eigenvalues is zero. Here, $ (r_1>0\wedge r_2=0)\wedge(r_1=0 \wedge r_2>0) \implies r_1\cdot r_2=0 $ and $ r_1+r_2>0 \implies (\mathrm{A}\Delta-\mathrm{B}\Gamma=0)\wedge(\mathrm{A}+\Delta>0).$ 
    
    % \item $(\mathrm{A}\Delta-\mathrm{B}\Gamma=0)\wedge(\mathrm{A}+\Delta=0):r_1\cdot r_2=0 \implies (r_1 \neq 0 \wedge r_2=0)\vee(r_1=0 \wedge r_2\neq0)\vee(r_1=0 \wedge r_2=0)$, but $\mathrm{A}+\Delta=0 \implies r_1+r_2=0$. Thus, $r_1=0$ and $r_2=0$, where all eigenvalues are zero (not a saddle).

    % \item $(\mathrm{A}\Delta-\mathrm{B}\Gamma=0)\wedge(\mathrm{A}+\Delta<0):r_1\cdot r_2=0 \implies (r_1 \neq 0 \wedge r_2=0)\vee(r_1=0 \wedge r_2\neq0)\vee(r_1=0 \wedge r_2=0)$, but $\mathrm{A}+\Delta<0 \implies r_1+r_2<0$. Thus, $(r_1<0\wedge r_2=0)$ or $(r_1=0 \wedge r_2<0)$ leads to a pair of complex eigenvalues (not a saddle).
\end{enumerate}
   
\end{proof}
\textbf{Remark 4:}  The contact latch system has no stable nodes, as each fixed point would need four real negative eigenvalues to be a stable node. Descartes' Sign Rule shows that a fixed point with the characteristic equation in Eq.~(\ref{eq:eigen}) can have at most two real negative eigenvalues. 

\textbf{Remark 5:} When the system is in the unlatched mode $\lambda=0$, and thus, the system reduces to two decoupled ODEs as follows,
\begin{equation*}
    \begin{bmatrix}
\dot{x}_1\\
 \dot{x}_2
\\ \dot{x}_3
\\ \dot{x}_4
\end{bmatrix} = \begin{bmatrix}
\dot{p}\\
\frac{1}{m}F_s\\ 
\dot{l}\\ 
\frac{1}{M}F_L
\end{bmatrix} 
\end{equation*}
Solving $\dot{x}=0$, we see that 
\[\dot{p}=0\:\:,\:\: \dot{l}=0\:\:, F_s=0\:\:,\:\: F_L=0\:\:,\]
Thus, no saddle-node exists for the system. The spring is at its natural length $p_0$ with no external force acting on it; the latch and projectile are decoupled with no controlling force $F_L$ acting on the latch.
\section{Bifurcations in Contact Latch Model }\label{sec:bif}

In this section, a methodical analysis is performed on the model presented in Section \ref{sec:model}, emphasizing the fixed (equilibrium) points of the system to show that a significant change occurs in its quantitative behavior as we increase our control input $F_L$, i.e., a bifurcation. 
\\

\begin{defin}[Bifurcation]
A bifurcation is a qualitative change in the structure of the phase space of a system when a control parameter is varied such that
\begin{enumerate}
    \item a fixed point vanishes or is created,
    \item the stability of a fixed point changes.
\end{enumerate}
\end{defin}
It should be noted that the system behavior of interest is when the projectile and latch are in contact with one another i.e., when $l^2=R^2-(R-p)^2$. To further investigate the behavior of the fixed point, we look into finding a closed-form expression for $p^*$ from Eq.~(\ref{eq:fixedpoint}). 
\begin{comment}
From the equilibrium equation, 
\[F_s(\frac{R^2-(R-p^*)^2}{M}) - (p^*-R) (\frac{\sqrt{R^2-(R-p^*)^2}}{m}F_L )=0.\] Substituting $F_s=-k(p^*-p_0)$ we get, \[k(p_0-p^*)\frac{(2Rp^*-p^{*2})}{M} -(p^*-R)\frac{\sqrt{2Rp^*-p^{*2}}}{m}F_L =0.\] Squaring both sides, 
\begin{equation*}
    \begin{split}
        &\frac{k^2}{M^2}(p_0^2+p^{*2}-2p^*p_0)(4R^2p^{*2}+p^{*4}-4Rp^{*3})\\
        &=\frac{(p^{*2}+R^2-2Rp^*)F_l^2(2Rp^*-p^{*2})}{m^2}.
    \end{split}
\end{equation*}Simplifying the equation,
\begin{equation*}
    \begin{split}
    &p^{*6}\frac{k^2}{M^2} +p^{*5}(\frac{-4Rk^2}{M^2}-\frac{2k^2p_0}{M^2}) + p^{*4}( \frac{p_0^2k^2}{M^2}+ \frac{4R^2k^2}{M^2}\\
     &+\frac{8Rp_0k^2}{M^2} +\frac{F_l^2}{m^2} )+p^{*3}( -\frac{4Rp_0^2k^2}{M^2} -\frac{8R^2p_0k^2}{M^2} - \frac{4RF_l^2}{m^2})\\
     &+ p^{*2}(\frac{4p_0^2R^2k^2}{M^2} +\frac{5F_l^2R^2}{m^2})+p^*( \frac{-2R^3F_l^2}{m^2})=0.
    \end{split}
\end{equation*}
\end{comment} 
However, getting a closed-form expression of $p^*$ from this expression proves to be very challenging. Our goal with this analysis is to find the trajectory of $p^*$ with a change in the latch force $F_L$. Thus, we can find the change in $p^*$ by implicitly differentiating the expression with respect to $F_L$. For brevity, we rewrite Eq.~(\ref{eq:fixedpoint}) as,
\[\mathrm{E}p^{*4}+\mathrm{Z}p^{*3}+\mathrm{H}p^{*2}+\Theta p^{*}+\mathrm{I}=0.\]
Differentiating implicitly with respect to $F_L$,
\[
\frac{dp^*}{dF_L}=\frac{-(\dot{\mathrm{E}}p^{*4}+\dot{\mathrm{Z}}p^{*3}+\dot{\mathrm{H}}p^{*2}+\dot{\Theta}p^{*}+\dot{\mathrm{I}})}{4\mathrm{E}p^{*3}+3\mathrm{Z}p^{*2}+2\mathrm{H}p^{}+\Theta}
\]
where, $\dot{\mathrm{E}}=0$, $\dot{\mathrm{Z}}=0$, $\dot{\mathrm{H}}=\frac{-2F_L}{m^2}$, $\dot{\Theta}=\frac{4RF_L}{m^2}$, $\dot{\mathrm{I}}=\frac{-2F_LR^2}{m^2}$. After substitution,
\begin{equation} \label{eq:nominal}
    \frac{dp^*}{dF_L}=\frac{\Lambda}{\Sigma+\Pi}
\end{equation}
where,
\begin{equation*}
    \begin{split}
        &\Lambda=-M^2(p^* - R)^2 F_L, \: \: \Sigma=F_L^2M^2(p^*-R),\\
        &\Pi=m^2(2k^4p^3+k^2p(p_0^2-3p(p_0+R))+(4R-p_0)p_0Rk^2).
    \end{split}
\end{equation*} 
Solving this ODE (for reference, we will call Eq.~(\ref{eq:nominal}) the nominal equation), with an initial condition $(p^*_0,F_{L0}^*)$ as an initial value problem (IVP), we can find the trajectory of the fixed point.
\\

\begin{propos}
\label{prop:bifurcation}
If the projectile $p$ starts at a fixed point $(p^*,F_{L}^*)$ in the Latched Mode, it will continue to move along with respect to Eq.~(\ref{eq:nominal}) until it reaches $(p^*,F_L^*)=(0,0)$ and vanishes after $F_L$ becomes positive, which indicates the bifurcation of the system.
\end{propos}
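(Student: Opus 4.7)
The plan is to combine Proposition~\ref{prop:FL}, Proposition~\ref{prop:fixed}, and Corollary~\ref{col:fp} with standard ODE continuation and continuity of polynomial roots to track the non-trivial fixed point as the control parameter $F_L$ increases from its initial negative value. Throughout, I view Eq.~(\ref{eq:nominal}) as the equation obtained by implicit differentiation of Eq.~(\ref{eq:fixedpoint}) along the branch of latched fixed points $p^*(F_L)\in(0,R)$.

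First, I would set up the initial value problem. For each $F_L\in(-\infty,0)$, Proposition~\ref{prop:fixed} yields a unique root of $D(p;F_L)=0$ inside $(0,R)$, which gives a well-defined single-valued branch $F_L\mapsto p^*(F_L)$. Provided the denominator $\Sigma+\Pi$ does not vanish, the implicit function theorem ensures that this branch is $C^1$ and satisfies Eq.~(\ref{eq:nominal}); Cauchy--Lipschitz then guarantees that the IVP solution starting at $(p^*_0,F_{L0}^*)$ coincides with this branch and extends uniquely forward in $F_L$.

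Next, I would show that the branch terminates at $(0,0)$. Since the coefficients of $D(\cdot;F_L)$ are continuous in $F_L$, its roots depend continuously on the parameter, so $p^*(F_L)$ has a limit $p^{\infty}\in[0,R]$ as $F_L\to 0^-$, and this limit must satisfy $D(p^{\infty};0)=0$. By Corollary~\ref{col:fp} the unique such value in $[0,R]$ is $p^{\infty}=0$, so the trajectory reaches $(p^*,F_L)=(0,0)$. For any $F_L>0$, Proposition~\ref{prop:FL} rules out the existence of fixed points in the latched mode, so the non-trivial fixed point disappears. By clause~(1) of the definition of bifurcation, $F_L=0$ is therefore a bifurcation value of the control parameter.

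The main obstacle is ensuring that the denominator $\Sigma+\Pi$ of Eq.~(\ref{eq:nominal}) does not vanish along the branch on $(F_{L0}^*,0)$, since otherwise the ODE becomes singular and the implicit function argument breaks down. I would handle this by contradiction: a zero of $\Sigma+\Pi$ at an interior point $(p^*(F_{L_1}),F_{L_1})$ would correspond to $\partial_p D=0$ there, i.e., a double root of $D(\cdot;F_{L_1})$ inside $(0,R)$, which contradicts the uniqueness of the interior root established in Proposition~\ref{prop:fixed}. At the terminal value $(0,0)$ the numerator $\Lambda$ vanishes while $\Pi$ remains bounded, yielding a well-defined horizontal tangent consistent with the non-trivial branch colliding with the stationary fixed point $p^*=0$---the hallmark of a saddle-node-type bifurcation.
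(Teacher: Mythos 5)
Your proposal follows essentially the same route as the paper's proof: it combines Proposition~\ref{prop:FL}, Proposition~\ref{prop:fixed}, and Corollary~\ref{col:fp} to continue the nontrivial fixed-point branch along Eq.~(\ref{eq:nominal}) as $F_L$ increases to zero, identifies the terminal point $(0,0)$ via Corollary~\ref{col:fp}, and concludes the bifurcation from the nonexistence of latched fixed points for $F_L>0$. The only difference is that you make explicit several steps the paper leaves implicit --- the implicit-function/continuation argument and the non-vanishing of the denominator $\Sigma+\Pi$, which you correctly tie to the simplicity of the unique interior root of $D$ established in Proposition~\ref{prop:fixed} --- so your write-up is, if anything, more complete than the paper's.
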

\begin{proof}
    Let $(p^*,F_{L}^*)$ be a fixed point to the system, thus $(p^*,F_{L}^*)$ must satisfy Eq.~(\ref{eq:fp}). As stated in Proposition~\ref{prop:fixed} and Corollary~\ref{col:fp}, the solution exist for all $F_L^*\leq 0$, implying the existence of solution of Eq.~(\ref{eq:nominal}) for the interval including $F^*_L\in [-\beta, 0]$ for any $\beta>0$. It is observed that as $F_L$ increases to zero, the corresponding $p^*$ will reach to $p^*=0$ when  $F_L^*=0$ (from Corollary \ref{col:fp}). As shown in Proposition~\ref{prop:FL} and Proposition~\ref{prop:fixed}, there does not exist a fixed point in the system when $F_L>0$. Hence, the system has the bifurcation in the Latched Mode.
\end{proof}

The bifurcation in the Latched Mode is shown in Proposition~\ref{prop:bifurcation}. It is not necessarily showing that the fixed point $p^*$ will be the saddle node across any $F_L<0$. The following corollary shows the sufficient condition for the bifurcation where the saddle node disappears.

Let $U$ be an open set such that the denominator of the nominal equation, Eq.~(\ref{eq:nominal}) is upper bounded,
% \[
%     H=\left\{ (p^*,F_L^*) \in \mathbb{R}^2 \:\:|\:\: \Sigma+\Pi<\beta \in\mathbb{R}\right\},
%     \]
%     and let 
$U \subset \mathbb{R}^2$ such that \[
     U=\left\{ (p^*,F_L^*) \in \mathbb{R}^2 \:\:|\:\: \Sigma+\Pi<0\right\}
    \]
    \\
\begin{corol}
    By choosing the LaMSA system design parameters $m, M, R$ and $p
    _0$ properly, there exist an open neighborhood of $(p^*, F_L^*)=(0,0)$, denoted as $V$, such that $V\subset U\cap S$, and the Latched system contains a bifurcation of the saddle node. 
\end{corol}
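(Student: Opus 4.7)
The strategy is to exhibit a single base point in the $(p^*,F_L^*)$-plane where both the bounded-denominator condition defining $U$ and the saddle condition from Lemma~\ref{lem:saddles} hold as \emph{strict} inequalities, and then invoke continuity to promote that point to an open neighborhood. The natural candidate is the collision point $(p^*,F_L^*)=(0,0)$, because by Corollary~\ref{col:fp} this is exactly where the moving fixed point of Proposition~\ref{prop:fixed} merges with the stationary equilibrium $(0,0,0,0)$, and by Proposition~\ref{prop:bifurcation} no fixed point survives for $F_L^*>0$; thus, if the moving equilibrium is a saddle on the $F_L^*<0$ side, its annihilation at $F_L^*=0$ is by definition a saddle-node bifurcation.

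Next, I would verify the two strict inequalities at $(0,0)$. For $U$: the holonomic constraint forces $l^*=0$ whenever $p^*=0$, so at $(0,0)$ one has $\Sigma=0$ while $\Pi$ collapses to a simple expression proportional to $(4R-p_0)p_0R$. Restricting the design parameters to satisfy $p_0>4R$ (still consistent with the standing requirement $p_0>R$) yields $\Sigma+\Pi<0$, placing $(0,0)$ in the interior of $U$. For $S$: at $(p^*,l^*,F_L^*)=(0,0,0)$ the coefficient $\mathrm{B}$ carries an explicit $F_L^*$ factor and $\Gamma$ carries an explicit $l^*$ factor, so $\mathrm{B}=\Gamma=0$ there. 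The characteristic polynomial in Eq.~(\ref{eq:eigen}) then factors as $(\lambda^2-\mathrm{A})(\lambda^2-\Delta)$. A direct substitution gives $\Delta|_{(0,0)}=-kp_0/(MR)<0$, and a similar (longer) calculation determines $\mathrm{A}|_{(0,0)}$ as an explicit rational expression in $m,M,R,p_0,k$; one selects the parameters so that this evaluation is strictly positive. Under those choices $h_1=\mathrm{A}\Delta-\mathrm{B}\Gamma=\mathrm{A}\Delta<0$, so $(0,0)$ lies in the $h_1<0$ branch of $S$.

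To close the argument I would apply continuity. Both $h_1$ and $h_2$ are rational in $(p^*,l^*,F_L^*)$ with denominator $(l^{*2}m+M(p^*-R)^2)^2$ that does not vanish at the base point, and $\Sigma+\Pi$ is polynomial. Hence the two strict inequalities $\Sigma+\Pi<0$ and $h_1<0$ persist on some open ball $V$ around $(0,0)$, giving $V\subset U\cap S$. Inside $V$ the moving equilibrium is a saddle, and by Proposition~\ref{prop:bifurcation} it is transported along the solution curve of the nominal equation Eq.~(\ref{eq:nominal}) until it collides with the stationary equilibrium at $(0,0)$ and disappears for $F_L^*>0$, which is the saddle-node bifurcation asserted by the corollary.

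I expect the main obstacle to be the explicit evaluation of $\mathrm{A}|_{(0,0)}$ and the extraction of clean, mutually compatible inequalities on $(m,M,R,p_0,k)$ that simultaneously enforce $\mathrm{A}|_{(0,0)}>0$, $p_0>4R$, and the standing physical assumptions of the model. Once that bookkeeping is settled, the remainder is a routine continuity argument combined with a direct appeal to Proposition~\ref{prop:bifurcation}.
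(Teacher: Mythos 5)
Your strategy is essentially the paper's: take $(0,0)$ as the base point, show it lies in $U$ by evaluating $\Sigma+\Pi$ there, show it lies in $S$ by evaluating $h_1$ there, promote both strict inequalities to an open neighborhood $V$ by continuity, and invoke Proposition~\ref{prop:bifurcation} for the disappearance of the saddle. Two of your refinements are worth noting. First, the observation that $\mathrm{B}$ carries an explicit factor of $F_L^*$ and $\Gamma$ an explicit factor of $l^*$ (with $l^*=0$ forced by the holonomic constraint at $p^*=0$), so that $h_1=\mathrm{A}\Delta$ at the origin with $\Delta|_{(0,0)}=-kp_0/(MR)<0$, is correct and cleaner than the paper's direct closed-form evaluation of $h_1$. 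Second, with $\Sigma=0$ and $\Pi=m^2(4R-p_0)p_0Rk^2$ at the origin, your condition $p_0>4R$ is the one that actually yields $\Sigma+\Pi<0$; the paper asserts the same conclusion under $p_0\in(R,4R)$, which is inconsistent with its own formula (and with its numerical example $p_0=2R$), so you have in effect flagged a sign discrepancy in the source rather than introduced an error.

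The gap is the one you name yourself, and it is not cosmetic: the corollary is an existence statement about the design parameters, so the proof must actually exhibit, or prove the existence of, a choice of $(m,M,R,p_0,k)$ for which $\mathrm{A}|_{(0,0)}>0$ holds \emph{simultaneously} with your condition $p_0>4R$ and the standing physical assumptions. You defer exactly this step. The paper closes it (modulo its own algebraic typos) by writing $h_1(0,0)$ in closed form, extracting the explicit lower bound $p_0>\bigl(m(m+M)^2R^2-R\bigr)/\bigl(2M(MR^2-1)\bigr)$ for sufficiently large $R$, and pointing to the constructive example of Section~\ref{sec:sims}. Until you carry out the evaluation of $\mathrm{A}|_{(0,0)}$ (a short computation: the three terms of $\mathrm{A}$ at the origin reduce to $km$, $k(2p_0-R)/R$, and $-2kp_0/R$) and verify that the resulting sign condition is compatible with the condition placing $(0,0)$ in $U$, the existence claim is not established; it is conceivable a priori that the two parameter requirements conflict, and ruling that out is the actual substance of the corollary.
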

\begin{proof}
    First, observe that if we choose $p_0\in (R, 4R)$ then $(p^*, F_L^*)=(0,0)\in U$ since 
\begin{equation*}
        \Sigma+\Pi = m^2(4R-p_0)p_0Rk^2<0.    \end{equation*}
Now, by computing for $h_1$ in Lemma~\ref{lem:saddles} at $(0,0)$, we have 
\begin{equation}
    h_1 = \frac{k^2p_0(m+MR)}{m(m+M)^2R^2}\cdot(1+\frac{M(2(1-MR^2)p_0-R)}{m(m+M)^2R^3})
\end{equation}
and so to ensure $h_1<0$, it is sufficient to have big enough $R$ and choose the natural lenght $p_0$, to satisfy
\begin{equation}
    p_0>\frac{m(m+M)^2R^2-R}{2M(MR^2-1)}
\end{equation}
and $p_0\in (R, 4R)$. The proof of the existence of such design choice satisfying the above inequality can be done constructive, and we give an example in Sec.~\ref{sec:sims}. This implies that $(0,0)\in S$. Now, by Proposition~\ref{prop:bifurcation}, there exist a bifurcation near $(0,0)$ such that the saddle node disappears. 
\end{proof}

\textbf{Remark 6:} The proposition and the corollary above indicate that the LaMSA system can have the saddle and the bifurcation at the same time by properly choosing the system design. This explains why the LaMSA system can contain two fundamental features: the mediation of the potential energy release through the saddle node, and the impulsive behavior after the saddle disappears.

\begin{figure*}[h] 
\centering 
\subfloat[$F_L=-15\: \mbox{N}$]{\includegraphics[width=0.33\textwidth]{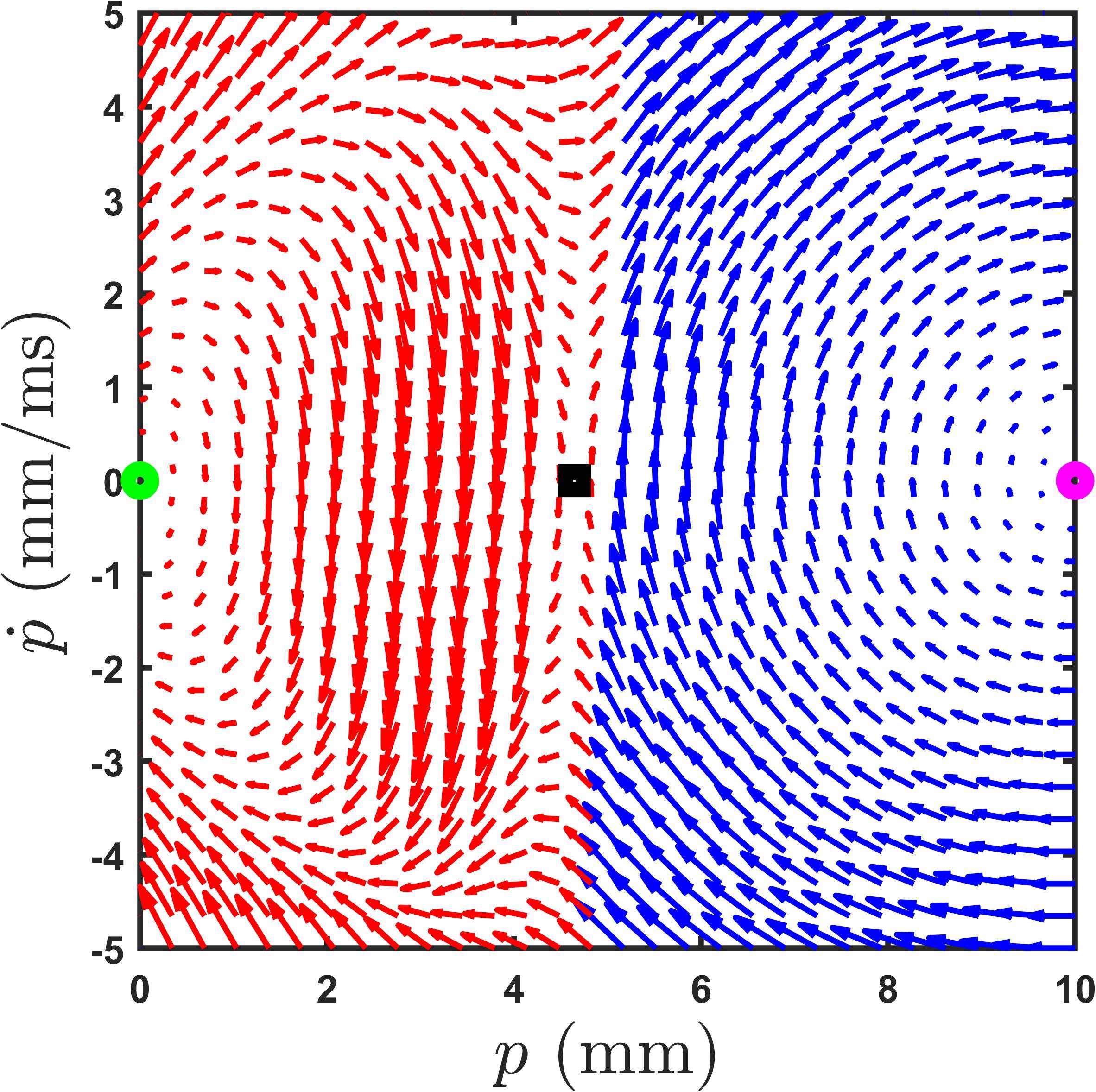}\label{fig:Fp_FL_15}}
\subfloat[$F_L=-5\: \mbox{N}$]{\includegraphics[width=0.33\textwidth]{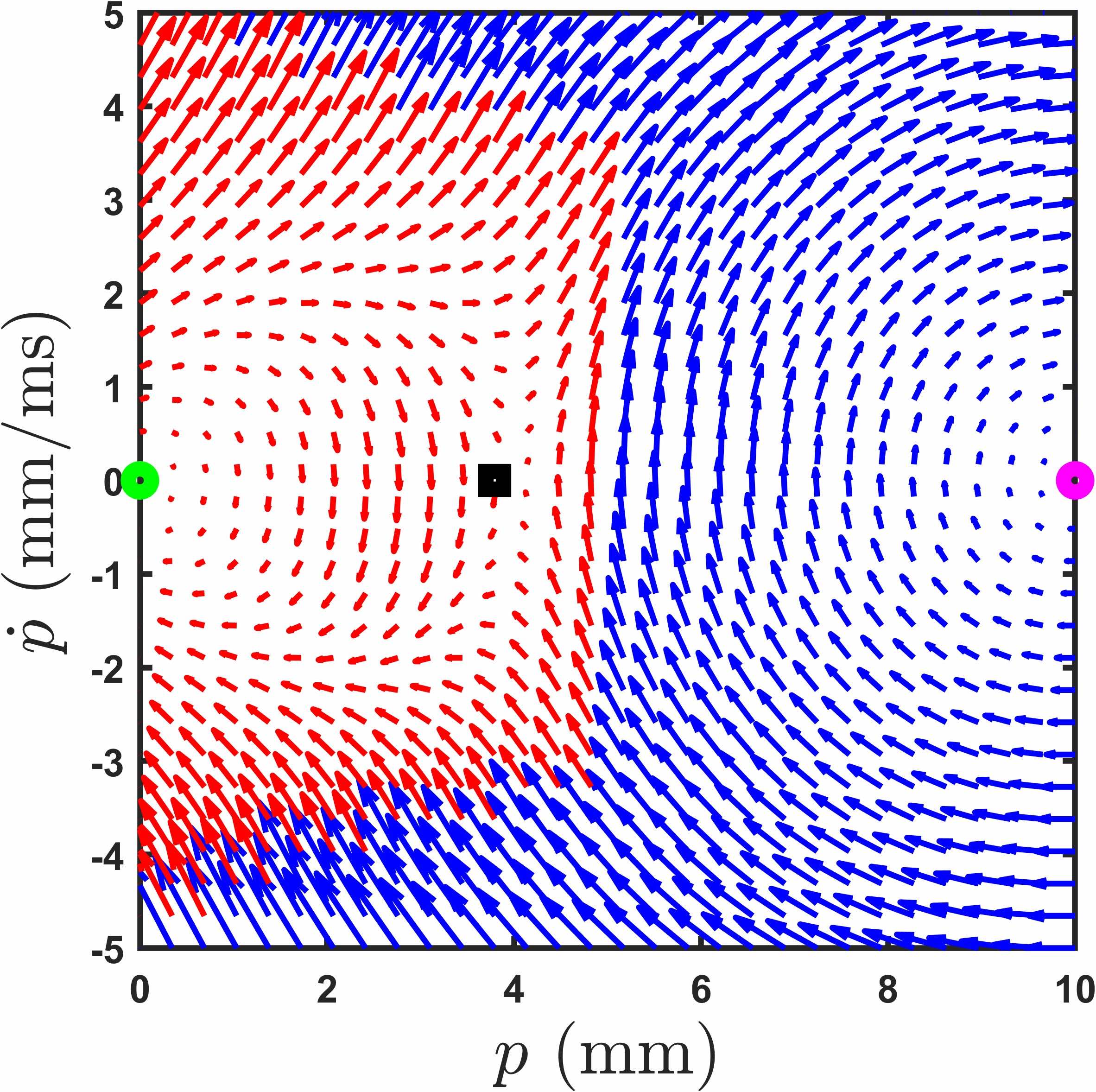}\label{fig:Fp_FL_10}}
\subfloat[$F_L=-1\: \mbox{N}$]{\includegraphics[width=0.33\textwidth]{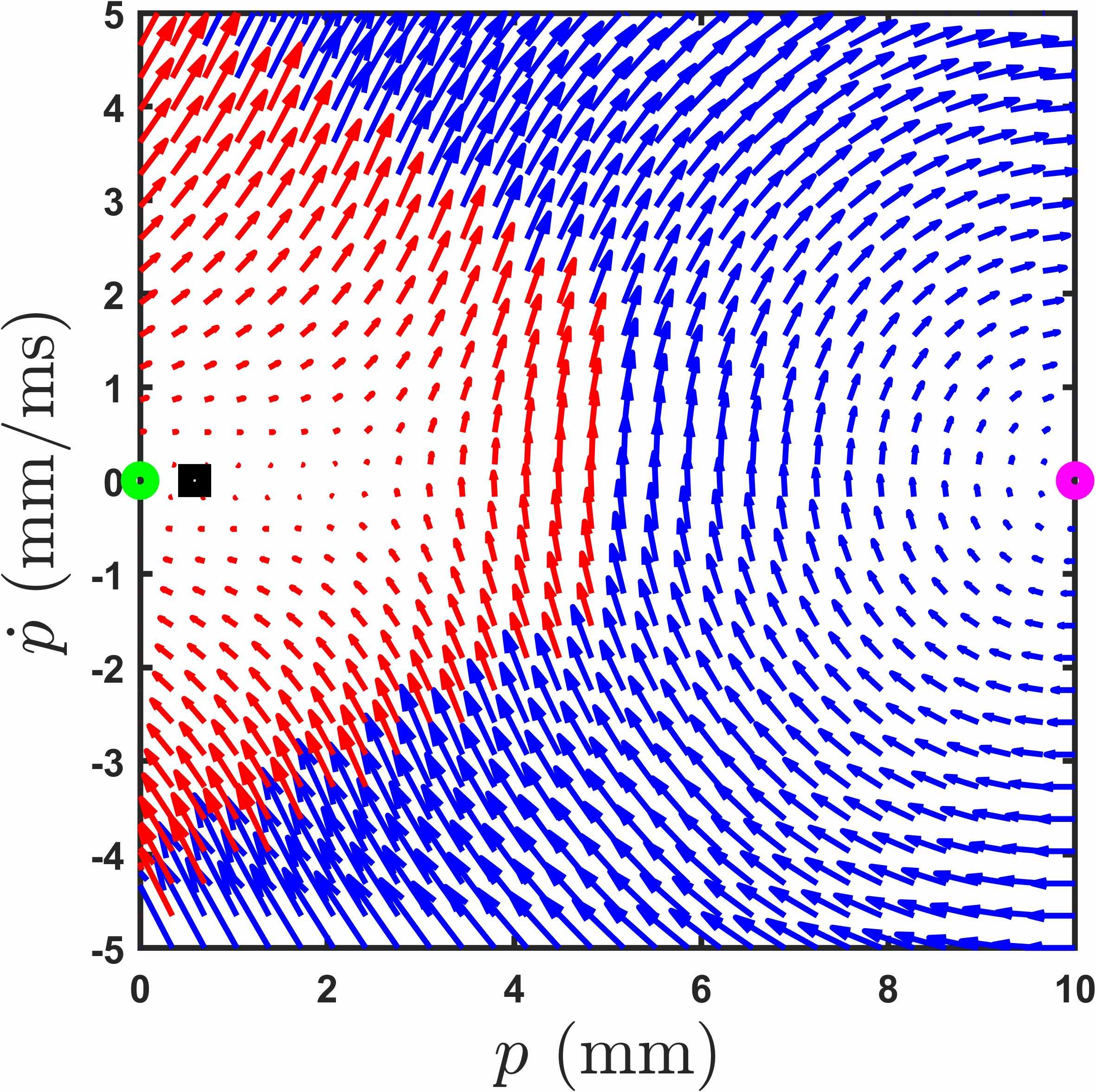}\label{fig:Fp_FL_1}/}
\caption{Fixed points in the Contact latch LaMSA system in Latched (Red) and Unlatched (Blue) mode. The stationary saddle is the green circle, the moving fixed point in Latched mode is the black square, and the fixed point in the Unlatched mode is the pink circle. }
\label{fig:fixedpoints}
\end{figure*}

\begin{figure}[ht!] 
\begin{center}
\includegraphics[scale=0.07]{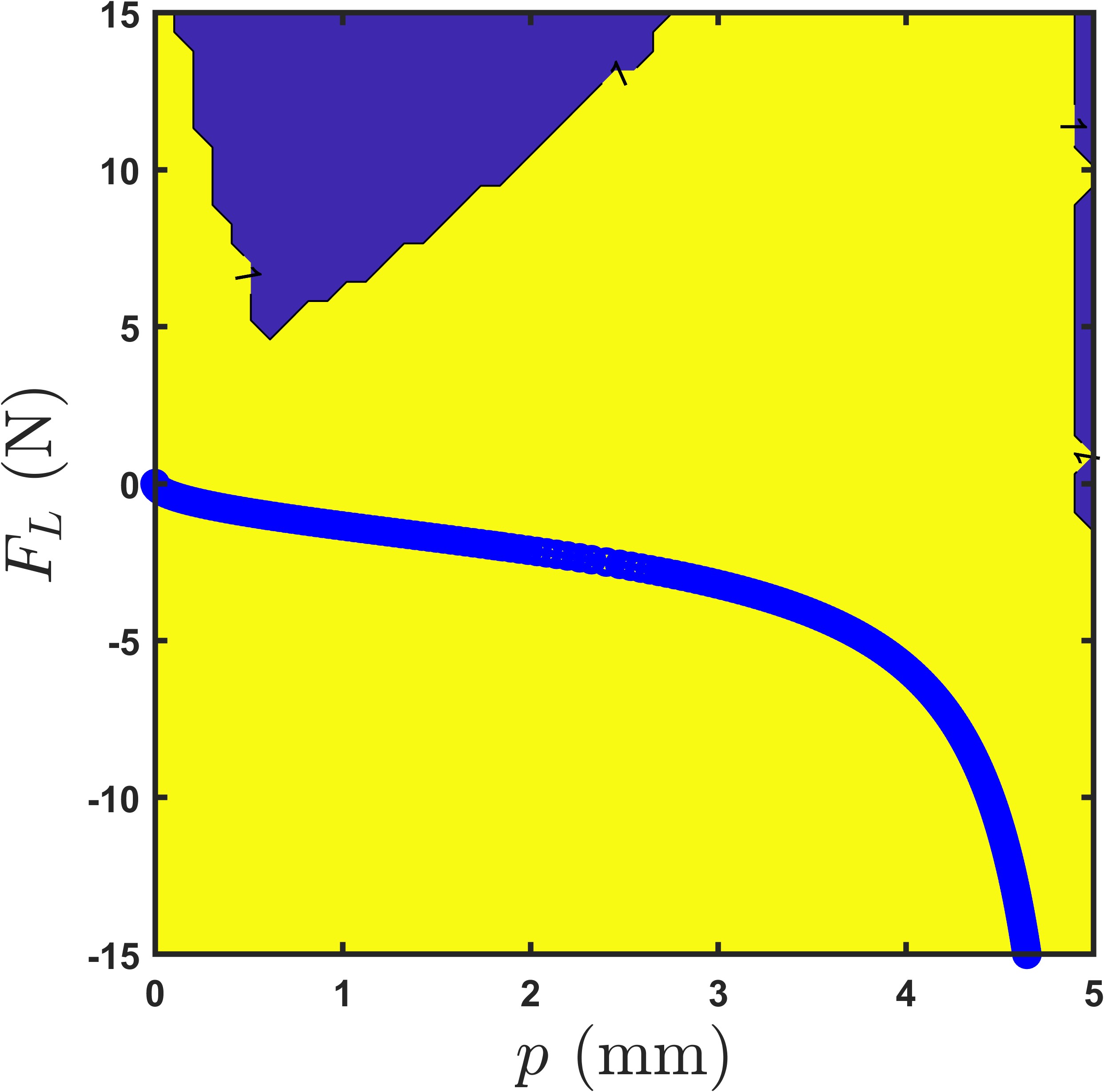}\label{fig:normal_15}
\caption{Trajectory of the nominal equation with varying initial conditions $(p^*,F_L^*)$.  The yellow region satisfies the saddle conditions, and the purple region does not satisfy the saddle conditions.} \label{fig:normal}
\end{center}
\end{figure}

\section{Numerical Simulations} \label{sec:sims}
In this section, we perform numerical simulations to validate the results presented in Section \ref{sec:bif}. The following parameters are used: $m=1$ mg, $M=5$ mg, $R=5$ mm, $k=1$ N$\mbox{m}^{-1}$ and $p_0=10$ mm. The latch force $F_L$ varies from $-15$ N to $0$ N. Intuitively, we can understand the negative $F_L$ as a pushing force on the latch. Solving Eq.~(\ref{eq:fixedpoint}) on Matlab, the fixed points of the system for each $F_L$ are plotted on the the phase portrait of projectile position $p$ vs velocity $\dot{p}$ (Fig.~\ref{fig:fixedpoints}). The phase portrait of the states $(p,\dot{p})$ for $F_L=-15$ N (Fig.~\ref{fig:Fp_FL_15}), $F_L=-10$ N (Fig.~\ref{fig:Fp_FL_10}) and $F_L=-1$ N (Fig.~\ref{fig:Fp_FL_1}) are shown in Fig.~\ref{fig:fixedpoints}. As per Proposition~\ref{prop:fixed}, the system has two real fixed points in the latched mode: one is stationary at $p=0$ (green circle), while the other fixed point varies its position based on $F_L$ (black square). We also observe the fixed point $p^*=p_0$ (pink circle) in the unlatched mode. Our analysis focuses on Latched Mode, which is the red sections. It is observed that as $F_L$ increases, the fixed point moves closer to the stationary point $p=0$ as expected.

% \begin{figure}[h] 
% \includegraphics[width=8cm]{Figures/normal_15.jpg}\label{fig:normal_15}
% \caption{Trajectory of the nominal equation with varying initial conditions $(p^*,F_L^*)$.  The yellow region satisfies the saddle conditions, and the purple region does not satisfy the saddle conditions.} \label{fig:normal}
% \end{figure}
\begin{figure}[t!]
\includegraphics[scale=0.07]{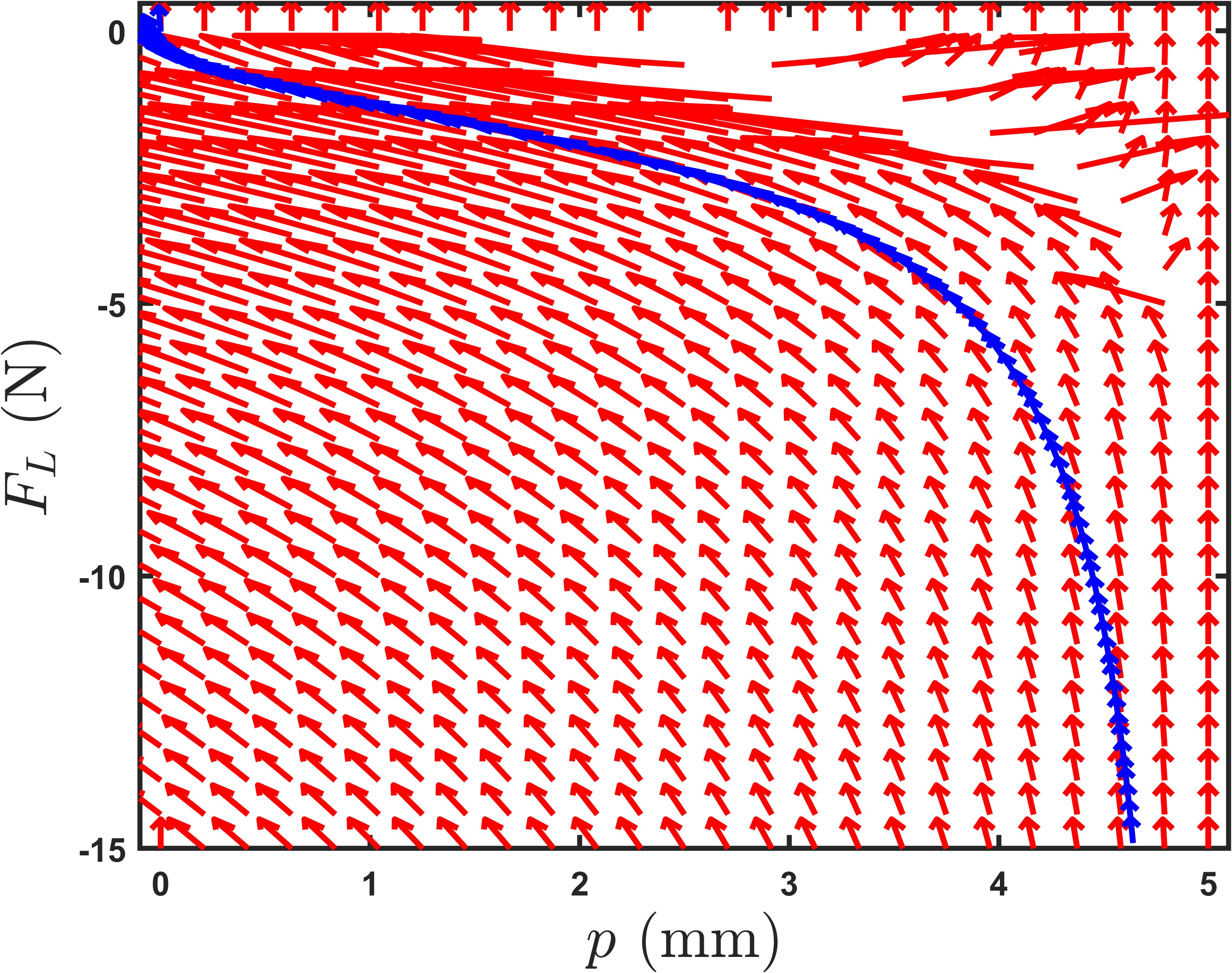}
\centering
\caption{Quiver plot of the nominal equation with the trajectory of saddle points $(p^*,F_L^*)$.}
\label{fig:quiver_normal}
\end{figure}

Numerically calculating the characteristic equation, Eq.~(\ref{eq:eigen}), for each $(p^*,F_L^*)$ pair shows that all the real fixed points of the system are saddles. We plot the region of the saddles $S$ based on Lemma~\ref{lem:saddles} (seen in Fig.~\ref{fig:normal}). The yellow region represents the set $S$, which satisfies the saddle conditions from Lemma~\ref{lem:saddles}. The purple region represents the complement of $S$ that does not satisfy the saddle condition. The movement of the fixed points in the Latched Mode is shown by solving Eq.~(\ref{eq:nominal}) numerically using ODE23 on Matlab. The initial conditions $(p^*,F_L^*)$ are fixed point of the system calculated from Eq.~(\ref{eq:fixedpoint}). 
% In Fig.~\ref{fig:normal}, the solution to the IVP is plotted for different initial conditions. 
It is observed that once the trajectory starts at a saddle, it continues along saddles until it reaches the stationary point (0,0). The saddle vanishes as $F_L$ becomes positive, which is a bifurcation. 

% In Fig.~(\ref{fig:quiver_normal}), we plot a quiver in red with the possible directions the projectile $p$ can go for each $F_L$. 
The phase portrait of the nominal equation, Eq.(\ref{eq:nominal}), is shown in Fig.~\ref{fig:quiver_normal} to show the sensitivity of the saddle point with the change of latch force, $F_L$. 
The red arrow indicates the flow of the nominal equation, and the blue arrow curve is the solution of Eq.(\ref{eq:nominal}) starting from the saddle pair $(p^*,F_L^*)=(4.64383, -15)$ computed in Fig.~\ref{fig:fixedpoints}.
% the tangential component of the IVP solution $p$ starting at the saddle pair $(p^*,F_L^*)=(4.64383, -15)$. 
Two interesting observations are noted: (i) No stable region is seen in the phase portrait, the trajectory at all points is pushed outward, and (ii) the sensitivity of the flow with respect to $F_L$ becomes minimal around the boundary conditions $p=R$, and $F_L=0$
. 
%all have zero tangential slopes and push the projectile straight up. 
Thus, starting at the saddle trajectory, following the blue curve, the projectile will reach to the minimally sensitive fixed point, $(0,0)$, and vanish as $F_L$ becomes positive. 
This system behavior explains the impulsive nature of the LaMSA system concerning the change of latch control force $F_L$. 
% This behavior of the system could explain why the system is impulsive. 
As we increase the latch force to a pulling force (i.e., $F_L>0$), the system has no real fixed point other than $p=0$ and $p=R$. We know that when $p=R$, the system will unlatch, as discussed in Remark 1. Therefore, no positive force would keep the system at the saddle, as the projectile will move, causing the impulsive jump.    

% \begin{figure}[ht!] 
% \begin{center}
% \includegraphics[scale=0.08]{Figures/normal_15.jpg}\label{fig:normal_15}
% \caption{Trajectory of the nominal equation with varying initial conditions $(p^*,F_L^*)$.  The yellow region satisfies the saddle conditions, and the purple region does not satisfy the saddle conditions.} \label{fig:normal}
% \end{center}
% \end{figure}

% \begin{figure}[ht!]
% \includegraphics[width=8cm]{Figures/quiver_normal.jpg}
% \centering
% \caption{Quiver plot of the nominal equation with the trajectory of saddle points $(p^*,F_L^*)$.}
% \label{fig:quiver_normal}
% \end{figure}

\section{Conclusions}
In this paper, the authors present a foundational analysis to study the intrinsic characteristics of the contact latch-based LaMSA mechanism. The LaMSA systems form a class of hybrid multi-dimensional Multi-modal systems where the system state space switches from a 1 DoF system when latched to a 2 DoF system when unlatched. These impulsive movements are prevalent in various species of animals like the mantis shrimp, click beetle, grasshopper, and trap-jaw-ants. These animals all use this impulse for different purposes like locomotion or protection. Significant work has been done in robotics to mimic these movements precisely from a physiological perspective. 

Through mathematical analysis of the fixed points, the authors show how the mediation of the latch occurs via the movement of system fixed points. Through lemmas and propositions in Section \ref{sec:saddle}, the authors show that the contact latch system has exactly two fixed points in the latched mode, one stationary at $p=0$ and one fixed point that moves by varying the latch force $F_L$. By performing numerical simulations, the authors further conclude that the movement of the fixed points aids in mediating the energy transfer of the projectile and takeoff.

The concept of bifurcations has been used extensively in analyzing the stability of hybrid systems. In Section \ref{sec:bif}, the authors investigate the movement of the fixed point in the latched mode and show that the fixed point moves along the phase plane and eventually vanishes at the stationary fixed point $(p^*,F_L^*)=(0,0)$ causing a bifurcation in the system. As shown in Proposition~\ref{prop:FL}, the moving fixed point disappears once the latch force $F_L>0$. Thus, the system takes off with the impulsive movement once a pulling latch force is applied. During the analysis of the fixed point, a sensitivity in the movement of the fixed point is observed. As the saddle fixed point moves closer to $p^*=2$, 
% $$(p^*,F_L^*)=(0,0)$, 
a slight increase in the latch force causes a significant change in the position of $p^*$. This metric could be used while deciding a control strategy for various applications, such as making the projectile take off at a certain velocity.

\subsection{Future Work and Applications}
This paper introduces the framework for analyzing the behavior of the class of LaMSA system, starting with the contact latch model. Extending the analysis to other classifications of the LaMSA mechanism remains as a future work. It is interesting to see the effects of the system parameters, such as the spring length and latch radius, on the performance of the contact latch. Different latch types can be used in the model to mimic the behavior of different species, as mentioned in Remark 2. In the future, an optimal latch could be designed to enhance the projectile takeoff and to ensure maximal energy transfer in the spring.

Taking inspiration from biology, these impulsive systems can be used in a wide range of applications. \cite{HP2016} constructed a jumping robot that could jump up to 78 \% the vertical jumping agility of a galago and explore terrains of locomotion that previously couldn't be attained with traditional jumping robots. \cite{HX2022} created a framework to compare the energies of biological vs engineered jumpers to achieve the highest jump possible. Using this framework they designed a device that jumps over 30 meters in height.

\begin{ack}
The authors acknowledge Prof. Robert J. Wood for the discussion on conceptualizing the generalized LaMSA framework within the bifurcation framework.
\end{ack}
\bibliography{ifacconf}    

\begin{thebibliography}{14}
\providecommand{\natexlab}[1]{#1}
\providecommand{\url}[1]{\texttt{#1}}
\providecommand{\urlprefix}{URL }
\expandafter\ifx\csname urlstyle\endcsname\relax
  \providecommand{\doi}[1]{doi:\discretionary{}{}{}#1}\else
  \providecommand{\doi}{doi:\discretionary{}{}{}\begingroup \urlstyle{rm}\Url}\fi

\bibitem[{Crane et~al.(2018)Crane, Cox, Kisare, and Patek}]{CP2018}
Crane, R.L., Cox, S.M., Kisare, S.A., and Patek, S.N. (2018).
\newblock Smashing mantis shrimp strategically impact shells.
\newblock \emph{Journal of Experimental Biology}, 221(11).

\bibitem[{Divi et~al.(2020)Divi, Ma, Ilton, St.~Pierre, Eslami, Patek, and Bergbreiter}]{DB2020}
Divi, S., Ma, X., Ilton, M., St.~Pierre, R., Eslami, B., Patek, S.N., and Bergbreiter, S. (2020).
\newblock Latch-based control of energy output in spring actuated systems.
\newblock \emph{Journal of The Royal Society Interface}, 17(168), 20200070.

\bibitem[{Haldane et~al.(2016)Haldane, Plecnik, Yim, and Fearing}]{HP2016}
Haldane, D.W., Plecnik, M.M., Yim, J.K., and Fearing, R.S. (2016).
\newblock Robotic vertical jumping agility via series-elastic power modulation.
\newblock \emph{Science Robotics}, 1(1).
\newblock \doi{10.1126/scirobotics.aag2048}.

\bibitem[{Hawkes et~al.(2022)Hawkes, Xiao, Peloquin, Keeley, Begley, Pope, and Niemeyer}]{HX2022}
Hawkes, E.W., Xiao, C., Peloquin, R.A., Keeley, C., Begley, M.R., Pope, M.T., and Niemeyer, G. (2022).
\newblock Engineered jumpers overcome biological limits via work multiplication.
\newblock \emph{Nature}, 604(7907), 657–661.
\newblock \doi{10.1038/s41586-022-04606-3}.

\bibitem[{Hyun et~al.(2023)Hyun, Olberding, De, Divi, Liang, Thomas, St.~Pierre, Steinhardt, Jorge, Longo, and et~al.}]{HO2023}
Hyun, N.P., Olberding, J.P., De, A., Divi, S., Liang, X., Thomas, E., St.~Pierre, R., Steinhardt, E., Jorge, J., Longo, S.J., and et~al. (2023).
\newblock Spring and latch dynamics can act as control pathways in ultrafast systems.
\newblock \emph{Bioinspiration and Biomimetics}, 18(2), 026002.

\bibitem[{Ilton et~al.(2018)Ilton, Bhamla, Ma, Cox, Fitchett, Kim, Koh, Krishnamurthy, Kuo, Temel, and et~al.}]{IB2018}
Ilton, M., Bhamla, M.S., Ma, X., Cox, S.M., Fitchett, L.L., Kim, Y., Koh, J.s., Krishnamurthy, D., Kuo, C.Y., Temel, F.Z., and et~al. (2018).
\newblock The principles of cascading power limits in small, fast biological and engineered systems.
\newblock \emph{Science}, 360(6387).

\bibitem[{Liu et~al.(2017)Liu, Chavez, Patek, Pringle, Feng, and Chen}]{LP2017}
Liu, F., Chavez, R.L., Patek, S.N., Pringle, A., Feng, J.J., and Chen, C.H. (2017).
\newblock Asymmetric drop coalescence launches fungal ballistospores with directionality.
\newblock \emph{Journal of The Royal Society Interface}, 14(132), 20170083.

\bibitem[{Longo et~al.(2019)Longo, Cox, Azizi, Ilton, Olberding, St~Pierre, and Patek}]{SL2019}
Longo, S.J., Cox, S.M., Azizi, E., Ilton, M., Olberding, J.P., St~Pierre, R., and Patek, S.N. (2019).
\newblock {Beyond power amplification: latch-mediated spring actuation is an emerging framework for the study of diverse elastic systems}.
\newblock \emph{Journal of Experimental Biology}, 222(15), jeb197889.

\bibitem[{Longo et~al.(2023)Longo, St.~Pierre, Bergbreiter, Cox, Schelling, and Patek}]{LP2023}
Longo, S.J., St.~Pierre, R., Bergbreiter, S., Cox, S., Schelling, B., and Patek, S.N. (2023).
\newblock Geometric latches enable tuning of ultrafast, spring-propelled movements.
\newblock \emph{Journal of Experimental Biology}, 226(2).

\bibitem[{Patek et~al.(2006)Patek, Baio, Fisher, and Suarez}]{PB2006}
Patek, S.N., Baio, J.E., Fisher, B.L., and Suarez, A.V. (2006).
\newblock Multifunctionality and mechanical origins: Ballistic jaw propulsion in trap-jaw ants.
\newblock \emph{Proceedings of the National Academy of Sciences}, 103(34), 12787--12792.

\bibitem[{Reynaga et~al.(2019)Reynaga, Eaton, Strong, and Azizi}]{RA2019}
Reynaga, C.M., Eaton, C.E., Strong, G.A., and Azizi, E. (2019).
\newblock Compliant substrates disrupt elastic energy storage in jumping tree frogs.
\newblock \emph{Integrative and Comparative Biology}, 59, 1535–1545.

\bibitem[{Steinhardt et~al.(2021)Steinhardt, Hyun, Koh, Freeburn, Rosen, Temel, Patek, and Wood}]{SH2021}
Steinhardt, E., Hyun, N.s.P., Koh, J.s., Freeburn, G., Rosen, M.H., Temel, F.Z., Patek, S.N., and Wood, R.J. (2021).
\newblock A physical model of mantis shrimp for exploring the dynamics of ultrafast systems.
\newblock \emph{Proceedings of the National Academy of Sciences}, 118(33).

\bibitem[{Verriest(2010)}]{EV2010}
Verriest, E.I. (2010).
\newblock Multi-dimensional multi-mode systems: Structure and optimal control.
\newblock In \emph{49th IEEE Conference on Decision and Control (CDC)}, 7021--7026.

\bibitem[{Zhang et~al.(2014)Zhang, Wang, Zhang, and Bi}]{ZW2014}
Zhang, R., Wang, Y., Zhang, Z., and Bi, Q. (2014).
\newblock Nonlinear behaviors as well as the bifurcation mechanism in switched dynamical systems.
\newblock \emph{Nonlinear Dynamics}, 79(1), 465–471.

\end{thebibliography}

%\appendix
%\section{Derivation of Jacobian Matrix} \label{Ap:jacobian}   % Each appendix must have a short title. 
\end{document}